\documentclass[11pt]{article}

\usepackage[colorlinks=true,linkcolor=blue,citecolor=blue]{hyperref}

\usepackage{amsmath, amssymb, amsthm, thm-restate}
\usepackage{mathtools}
\usepackage[capitalize,nameinlink]{cleveref}
\usepackage{fullpage}
\usepackage{graphics}
\usepackage{pifont}
\usepackage{tikz}
\usepackage{multirow}
\usepackage{bbm}
\usepackage{cite}

\usetikzlibrary{arrows.meta}
\usepackage{environ}
\usepackage{framed}
\usepackage{url}
\usepackage{algorithm}
\usepackage[noend]{algpseudocode}
\usepackage[labelfont=bf]{caption}
\usepackage{cite}
\usepackage{framed}
\usepackage[framemethod=tikz]{mdframed}
\usepackage{appendix}
\usepackage{graphicx}
\usepackage[textsize=tiny]{todonotes}
\usepackage{tcolorbox}


\newcommand\remove[1]{}

\newtheorem{theorem}{Theorem}
\newtheorem{lemma}{Lemma}[section]
\newtheorem*{lemma*}{Lemma}

\newtheorem*{corollary*}{Corollary}

\newtheorem{definition}[lemma]{Definition}

\theoremstyle{definition}

\newtheorem*{theorem*}{Theorem}
\newtheorem{rem}[lemma]{Remark}

\newtheorem*{rem*}{Remark}
\newtheorem{example}{Example}

\newcommand\R{\mathbb{R}}

\newcommand\E{\mathbb{E}}

\newcommand{\eps}{\varepsilon}
\renewcommand{\O}{\widetilde{O}}

\newcommand{\bs}{\backslash}

\newcommand{\hf}{\hat{f}}

\newcommand{\assign}{\leftarrow}

\newcommand{\eo}{\mathrm{EO}}
\renewcommand{\forall}{\mathrm{\text{ for all }}}
\newcommand{\proj}{\mathrm{proj}}
\newcommand{\argmin}{\mathrm{argmin}}
\newcommand{\opt}{\mathrm{opt}}
\newcommand{\bg}{\mathrm{\textbf{g}}}
\newcommand{\bz}{\mathrm{\textbf{z}}}
\newcommand{\bd}{\mathrm{\textbf{d}}}
\newcommand{\Process}{\textsc{Process}}
\newcommand{\Sample}{\textsc{Sample}}
\newcommand{\Findperm}{\textsc{FindPerm}}
\newcommand{\SFM}{\textsc{SFM}}
\newcommand{\sign}{\mathrm{sign}}
\newcommand{\SparseSFM}{\textsc{SparseSFM}}
\newcommand{\FindIndex}{\textsc{FindIndex}}
\renewcommand{\eo}{\mathrm{EO}}
\newif\ifrandom
\randomtrue

\newcommand{\defeq}{\stackrel{\mathrm{\scriptscriptstyle def}}{=}}

\newcommand{\poly}{{\rm poly}}


\newcommand{\todolater}[1]{}

\newcommand{\etal}{\textit{et~al.}}

\author{
Brian Axelrod \\
Stanford University \\
\texttt{baxelrod@cs.stanford.edu}
\thanks{Research supported by in part by an NSF graduate research fellowship and a Finch family fellowship}
\and
Yang P. Liu \\
Stanford University \\
\texttt{yangpatil@gmail.com}
\thanks{Research supported by the U.S.
Department of Defense via an NDSEG fellowship.}
\and
Aaron Sidford \\
Stanford University \\
\texttt{sidford@stanford.edu}
\thanks{
Research was supported in part by NSF CAREER Award CCF-1844855.
}
}

\begin{document}

\title{Near-optimal Approximate Discrete and Continuous \\ Submodular Function Minimization}

\begin{titlepage}
\clearpage\maketitle
\thispagestyle{empty}

In this paper we provide improved running times and oracle complexities for approximately minimizing a submodular function. Our main result is a randomized algorithm, which given any submodular function defined on $n$-elements with range $[-1, 1]$, computes an $\eps$-additive approximate minimizer in $\tilde{O}(n/\eps^2)$ oracle evaluations with high probability. This improves over the $\tilde{O}(n^{5/3}/\eps^2)$ oracle evaluation algorithm of Chakrabarty \etal~(STOC 2017) and the $\tilde{O}(n^{3/2}/\eps^2)$ oracle evaluation algorithm of Hamoudi \etal.

Further, we leverage a generalization of this result to obtain efficient algorithms for minimizing a broad class of nonconvex functions. For any function $f$ with domain $[0, 1]^n$ that satisfies $\frac{\partial^2f}{\partial x_i \partial x_j} \le 0$ for all $i \neq j$ and is $L$-Lipschitz with respect to the $L^\infty$-norm we give an algorithm that computes an $\eps$-additive approximate minimizer with $\tilde{O}(n \cdot \mathrm{poly}(L/\eps))$ function evaluation with high probability.

\end{titlepage}

\newpage

\section{Introduction}

A function $f$ which assigns real values to subsets of a finite universe $U$ is \emph{submodular} if it satisfies the \emph{decreasing marginal returns} property, i.e. $f(T \cup \{i\})-f(T) \le f(S \cup \{i\})-f(S)$ for all subsets $S \subseteq T \subseteq U$ and elements $i \not\in T$. Such functions are natural, arise in many applications, and have been studied extensively since the 1950s \cite{Cho55, Edmonds70, McC05, Fuji05, Lov83}. For example, the sizes of cuts in directed graphs or hypergraphs, the rank function of a matroid, and the entropy of subsets of random variables are all submodular. Further the utility functions of agents purchasing a subset of items is often assumed to be submodular. Given their prevalence, the optimization of submodular functions is fundamental to combinatorial optimization and both submodular function maximization \cite{Feige11, Chek14} and minimization have been studied extensively.

In this work, we focus on submodular function minimization (SFM), i.e. finding a subset $S \subseteq U$ minimizing $f(S)$. As submodular functions need not be monotone  and SFM generalizes multiple fundamental combinatorial optimization problems, including computing $s$-$t$ minimum cuts in directed graphs and hypergraphs, SFM is nontrivial. More recently, SFM has been applied to many problem domains, such as image segmentation \cite{B99, Koh08, Koh10}, speech analysis \cite{Lin10, Lin11, LinB11}, and machine learning \cite{Bach13, KG11}.

In this paper we consider the standard and well-studied model for SFM where $f$ can be accessed only through an \emph{evaluation oracle} which when queried with $S \subseteq U$ returns $f(S)$. For simplicity, we measure the complexity of our algorithms by the number of queries, i.e. \emph{oracle calls}, or \emph{function calls}, that we make to the evaluation oracle; the additional runtime of all new algorithms in this paper can be nearly linear in the number of oracle calls. Throughout the introduction, we refer to the time needed for an oracle call as $\eo$. An amazing result is that SFM can be solved with a number of queries polynomial in $n$, the number of elements in the universe $J$. This was demonstrated initially via the ellipsoid algorithm \cite{GLS84} spawning a long line of work faster algorithms. 

Previous research on algorithms for SFM has focused on three main regimes: strongly polynomial, weakly polynomial, and pseudopolynomial time \cite{Wolfe76, Fujis80, CJK14, LJJ15}. Letting $M$ be the maximum absolute value of the integer-valued submodular function $f$ on an $n$-element universe, strongly polymomial, weakly polynomial, and pseudopolynomial refer to algorithms whose runtimes are all polynomial in $n$ and  independent of $M$, logarithmic in $M$, and polynomial in $M$ respectively. For these regimes, the best known dependence on $n$ in terms of the number of oracle calls needed has a clear picture: nearly cubic in the strongly polynomial regime \cite{LSW15}, quadratic in the weakly polynomial regime \cite{LSW15}, and linear in the pseudopolynomial regime \cite{CLSW17}. 

We can also view these results in a slightly different way. Instead, let $f$ be a \emph{real-valued} submodular function with range $[-1,1]$, and consider the goal of finding an $\eps$-additive approximate minimizer. Here, it is natural to study approximate SFM algorithms whose runtimes are independent of $\eps$, depends logarithmically on $\eps$, and depends polynomially on $\eps$. These correspond to the strongly polynomial, weakly polynomial, and approximate regimes respectively\cite{CLSW17}. The best known runtime in the strongly polynomial and weakly polynomial regimes continue to be cubic and quadratic respectively in this view. However, despite the well-studied nature of SFM and clear picture in terms of $M$ dependencies, the runtime of approximate SFM is less understood. 
The state-of-the-art such runtime is $\O(n^{3/2}/\eps^2)$\footnote{Throughout, we use $\O$ to hide $\poly(\log n, \log(1/\eps), \log M)$ factors.} which was achieved by the contemporary work of \cite{HRRS19} and improved upon the $\O(n^{5/3}/\eps^2)$ runtime algorithm of \cite{CLSW17}.
In this paper, we close this gap and give a nearly linear time, $\O(n/\epsilon^2)$ time algorithm for $\eps$-approximate SFM.
. 

\paragraph{Nonconvex optimization:}
Another key motivation for the results of this paper is obtaining provably faster algorithms for obtaining global minimizers of broad classes of non-convex functions. Consider the problem $\min_{x \in \mathcal{X}} f(x)$ for a function $f : \R^n \to \R.$ For \emph{convex optimization}, when $f$ and $\mathcal{X}$ are both convex, there are numerous methods for solving this problem: gradient descent, cubic regularized newton, cutting plane, etc. On the other hand, the situation for nonconvex optimization, i.e. finding the global minimum of a nonconvex function $f$ in general, is computationally intractable: finding an $\eps$-approximate minimizer for a $k$-times continuously differentiable $f: \R^n \to \R$ requires $\Omega((1/\eps)^{n/k})$ evaluations of the function and its first $k$ derivatives, ignoring problem dependent parameters such as the Lipschitz smoothness of $f$, etc. \cite{NY83}. 

Nevertheless, as many practical problems, e.g. training neural networks and matrix completion, are nonconvex, it still important to understand what guarantees we can achieve for nonconvex optimization. Because computing an $\eps$-approximate global minimizer of a general convex function, as discussed, intractable in general, some work has focused on finding $\eps$-approximate stationary points or approximate local minima \cite{NP06, N12, CDHS16, BGMST17}. In addition, there are specific problems such as matrix completion where all local minima are in fact global minima \cite{GLM16}. Given these results, it would be tantalizing to find large classes of nonconvex functions for which we can find a global minimizer; however, this has been a challenging task achieved in only a few situations \cite{GG17, NGGD18, HSS19}.

In recent work, Bach \cite{Bach19} considered a class of nonconvex functions $f:[0,1]^n \to \R$ satisfying $\frac{\partial^2f}{\partial x_i \partial x_j} \le 0$ for all $i \neq j$, which are a continuous generalization of submodular functions. 
Several interesting functions satisfy this property, e.g. $f(x) = x^TQx$ where $Q$ is symmetric with negative off diagonal entries and $f(x) = g(\sum_i c_ix_i)$ for some concave function $g: \R \to \R$ and positive weights $c_i$. The former function is neither convex nor concave, and the latter is concave.

Despite the fact that these functions can be nonconvex, \cite{Bach19} provided an algorithm to find $\eps$-approximate global minimizers in time polynomial in $n, \eps$, and problem dependent parameters.  In our work, we improve upon Bach's cubic dependence on $n$ and show that these functions can in fact be minimized \emph{almost as efficient as convex functions} in terms of the best known methods: nearly linear in the dimension $n$, and polynomial in $\eps$ and the $L^\infty$ Lipschitz constant.

\subsection{Our results}

In this paper we address a key open problem in the work of \cite{CLSW17} and \cite{HRRS19} regarding whether we can achieve a nearly linear runtime for approximate SFM. We resolve this problem in this paper, giving an $\O(n\eps^{-2} \cdot \eo)$ time algorithm for approximate SFM. This also directly improves the previous pseudopolynomial time algorithms in terms of their dependence on $M$, the range of an integer submodular function. Further, due to the subgradient oracle lower bound given in \cite{CLSW17} and the fact that a subgradient oracle yields more information than an evaluation oracle, this bound is known to be optimal up to the dependence on $\eps$.

\begin{theorem}[Nearly linear time submodular function minimization]
\label{thm:main}
Given a submodular function $f: \{0, 1\}^n \to [-1, 1]$ and an $\eps > 0$, we can compute a random set $S$ with \[ \E[f(S)] \le \min_{T\subseteq [n]} f(T) + \eps \] in $\O(n/\eps^2)$ calls to an evaluation oracle for $f$.
\end{theorem}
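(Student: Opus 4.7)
\medskip

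\noindent\textbf{Proof proposal.} The plan is to reduce approximate SFM to stochastic subgradient optimization of the Lov\'asz extension $\hat f \colon [0,1]^n \to \R$ and then engineer a cheap, low-variance stochastic subgradient oracle so that the full optimization takes only $\widetilde O(n/\eps^2)$ evaluation-oracle calls. The reduction itself is standard: $\hat f$ is convex, agrees with $f$ on $\{0,1\}^n$, and threshold rounding at a uniformly random $\tau \in [0,1]$ gives $\E[f(S_\tau)] = \hat f(x)$, so an $\eps$-additive minimizer of $\hat f$ yields one for $f$ in expectation. The exact Lov\'asz subgradient at $x$ requires sorting $x$ into a permutation $\pi$ and then $n$ oracle calls to produce $g_{\pi(k)} = f(S_k) - f(S_{k-1})$ where $S_k = \{\pi(1),\dots,\pi(k)\}$; plugging this directly into projected SGD on $[0,1]^n$ costs $\widetilde O(n^2/\eps^2)$ calls, so the heart of the argument must be a subgradient estimator that runs in $\widetilde O(1)$ expected oracle calls and has $\E\|\tilde g\|_2^2 = \widetilde O(1)$.

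First I would set up the SGD driver and its hyperparameters. On the domain $[0,1]^n$ of squared diameter $n$, stochastic projected subgradient descent with step size $\eta = \Theta(\eps/G^2)$ returns an averaged iterate whose expected suboptimality is $O(D \sqrt{G^2/T}) = O(\sqrt{nG^2/T})$ after $T$ iterations, so achieving $\eps$-error needs $T = O(nG^2/\eps^2)$ steps. If the estimator costs $Q$ expected queries per step and has $G^2 = \widetilde O(1)$, the total oracle cost is $\widetilde O(n/\eps^2)$, matching the target. A high-probability version follows by a standard martingale / median-of-runs boosting, and the final rounding step above converts the expected-value guarantee on $\hat f$ into the set-valued guarantee in the theorem.

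Next I would build the sparse subgradient estimator. The key idea is a hierarchical, telescoping decomposition of the chain $S_0 \subset S_1 \subset \cdots \subset S_n$: using a balanced binary tree over $[n]$, an arbitrary telescoping difference $f(S_k)-f(S_{k-1})$ can be rewritten as a sum of $O(\log n)$ block-differences along the root-to-leaf path, each of which is common to many indices. Sampling a random path (or random index and random level) and returning the appropriately reweighted block-difference $n \cdot \Delta$ in the direction $e_{\pi(k)}$ yields an unbiased estimator of the full subgradient using only $O(\log n)$ oracle evaluations per sample. The critical submodular input is a second-moment inequality of the flavor used implicitly in \cite{CLSW17, HRRS19}: across all levels of the tree, the weighted sum of squared block-differences of a submodular function with range $[-1,1]$ is $\widetilde O(1)$. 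This follows from a ring-family / level-set argument --- monotonicity of marginals under submodularity lets one charge squared differences against the total variation of a $1$-Lipschitz sequence, which is $O(1)$.

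The main obstacle will be establishing this variance bound cleanly and handling the fact that sorting $x$ at every step would itself cost $\Omega(n\log n)$ work; I would maintain the sorted order lazily across iterations using the fact that SGD updates only one (or few) coordinates noticeably per step. A secondary subtlety is that the block-differences appearing in the estimator are themselves evaluated via the oracle for $f$ on \emph{specific subsets} determined by $\pi$, so I would need to show that these subsets can be constructed from the lazy sort in $O(1)$ amortized time per query. Once these two pieces are in place --- the $\widetilde O(1)$-query, $\widetilde O(1)$-variance subgradient sampler and the efficient bookkeeping --- combining with the SGD guarantee above gives the claimed $\widetilde O(n/\eps^2)$ oracle complexity, and the rounding reduction produces the random set $S$ in the theorem statement.
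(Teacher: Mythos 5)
Your high-level framing (SGD on the Lov\'asz extension, a $1$-sparse stochastic subgradient with $\widetilde O(1)$ second moment, $T=\widetilde O(n/\eps^2)$ iterations, rounding at the end) matches the paper, but the estimator you propose does not achieve $\widetilde O(1)$ variance, and it misses the paper's central mechanism. First, ``sampling a random index and random level and returning the reweighted block-difference $n\cdot\Delta$'' is uniform sampling reweighted by $n$; its second moment is $n\sum_k g_{\pi(k)}^2$, which for a subgradient with $\|g\|_1\le 3$ concentrated on a single coordinate is $\Theta(n)$, not $O(1)$. To get second moment $O(\|g\|_1^2)=O(1)$ you must importance-sample coordinate $i$ with probability $|g_i|/\|g\|_1$. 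Second, the tree of block sums $f(S_b)-f(S_{a-1})=\sum_{i=a}^{b} g_{\pi(i)}$ cannot drive that importance sampling on $g(x)$ itself because of cancellation: a block sum can be near zero while the block contains coordinates of large opposite sign, so descending the tree proportionally to block sums is either biased or has unbounded variance. This is precisely why the paper never samples from $g(x_t)$ directly.

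What the paper does instead --- and what is absent from your proposal --- is to telescope across \emph{iterations} rather than within a single chain: compute $g(x_0)$ exactly once ($n$ calls, affordable), write $g(x_t)=g(x_0)+\sum_{j}\bigl(g(x_{i_{j+1}})-g(x_{i_j})\bigr)$ for a short sequence $0=i_0<\cdots<i_m=t$ determined by the binary representation of $t$, and use the fact that when $x_a-x_b$ is $k$-sparse, submodularity (\cref{lemma:change}) splits $g(x_a)-g(x_b)$ into $O(k)$ sign-coherent intervals on which the block-sum tree \emph{does} give cancellation-free importance sampling (\cref{lemma:sample}). The binary-counter schedule makes the total preprocessing cost $\sum_{t\le T} O(2^{\nu_2(t)})=\widetilde O(T)$, i.e.\ amortized $\widetilde O(1)$ per step, while keeping $m=O(\log T)$ so the summed estimate is $O(\log T)$-sparse with $\widetilde O(1)$ second moment. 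Your ``lazy sorting'' remark addresses only computational bookkeeping, not this oracle-complexity obstruction; without the cross-iteration decomposition and the sign-coherence argument, the variance bound you assert does not follow from submodularity.
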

We can convert the guarantee of \cref{thm:main} to a w.h.p.\footnote{Throughout our results, we use w.h.p. to mean ``with high probability in $n$".} guarantee as follows. Note that the probability that $f(S) > \min_{T\subseteq [n]} f(T) + 2\eps$ is at most $1/2$ by Markov's inequality. Consequently, we can amplify to the success probability to $1 - \frac{1}{\poly(n)}$ by running the algorithm $O(\log n)$ times to half the expected error and outputing the smallest value.

We also achieve \emph{sublinear} time algorithms for pseudopolynomial SFM in settings where we know that the minimizer of $f$ is $s$-sparse, i.e. only has $s$ nonzero entries.

\begin{theorem}[Pseudopolynomial submodular function minimization]
\label{thm:sparse}
Consider an integer valued submodular function $f: \{0, 1\}^n \to [-M, M]$ with $s$-sparse minimizer, i.e. there is a set $S^\opt \in \argmin_{S \subseteq \{0, 1\}^n} f(S)$ satisfying $|S^\opt| \le s.$ Then we can compute an exact minimizer of $f$ in $\O(sM^2)$ calls to an oracle for $f$ w.h.p.
\end{theorem}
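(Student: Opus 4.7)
}
The plan is to reduce exact pseudopolynomial sparse SFM to approximate SFM via \cref{thm:main}, combined with an iterative support-identification strategy. First, rescale $f$ by $M$ to obtain a submodular function $g = f/M$ with range $[-1,1]$. Since $f$ is integer-valued, the gap between $f(S^*)$ and $f(T)$ for any non-minimizer $T$ is at least $1$, so an $\eps$-additive approximate minimizer of $g$ with $\eps < 1/(2M)$ is necessarily an exact minimizer of $f$. Amplifying the Markov-type guarantee of \cref{thm:main} via $O(\log n)$ repetitions and taking the argmin upgrades the expected-error guarantee to a w.h.p.\ exact-minimizer guarantee, at the cost of a logarithmic factor absorbed into the $\tilde O$.

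Applying this reduction naively to $g$ on the full ground set $[n]$ would cost $\tilde O(nM^2)$ oracle calls; to reach $\tilde O(sM^2)$ we must exploit sparsity. The key observation is that if we had a candidate set $R \subseteq [n]$ with $S^* \subseteq R$, then restricting $g$ to subsets of $R$ (setting coordinates outside $R$ to $0$) yields a submodular function on $|R|$ elements with the same minimizer, and applying the amplified \cref{thm:main} to this restriction takes only $\tilde O(|R| M^2)$ oracle calls. The problem therefore reduces to finding such an $R$ of size $\tilde O(s)$ at comparable cost.

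To construct $R$, I would maintain a working support set $R_k$ that doubles geometrically in size, starting from a small initial sample and terminating once $|R_k| = \tilde O(s)$. In phase $k$, invoke the subroutine above to obtain a candidate minimizer $\hat{S}_k$ of $g$ restricted to $R_k$, and then use marginal-based tests (exploiting that submodular marginals $g(T \cup \{i\}) - g(T)$ are monotone non-increasing in $T$) combined with random sampling over $[n]\setminus R_k$ to either certify $\hat{S}_k = S^*$ or to identify new elements to add. Because $|S^*| \le s$, the doubling argument guarantees that only $O(\log s)$ phases occur before $|R_k|$ exceeds $s$ and the procedure can be made to absorb all of $S^*$ into $R_k$; the geometric-growth property ensures that the total oracle cost $\sum_k \tilde O(|R_k| M^2)$ is a geometric series dominated by its last term, yielding the claimed $\tilde O(sM^2)$.

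The main obstacle I anticipate is the certification/identification step: ensuring that, with high probability and using only $\tilde O(M^2)$ queries per phase, one can either verify global optimality of $\hat{S}_k$ or produce a new element of $S^* \setminus R_k$. This is delicate because submodular functions can have spurious local minima under single-element tests, so the argument must combine the fact that $\hat{S}_k$ is \emph{exactly} optimal within $R_k$ (not just approximately) with structural properties of submodular minimizers over supersets, using sampling to avoid paying $\Theta(n)$ in the marginal scan. Once this step is handled, the remaining analysis is bookkeeping over the $O(\log s)$ phases together with a union bound over the w.h.p.\ guarantees of each invocation of the amplified \cref{thm:main}.
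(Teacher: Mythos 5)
Your reduction defers the entire difficulty to the ``certification/identification'' step, and that step, as described, does not work. Two concrete problems. First, the needle-finding: if $\hat S_k$ minimizes $f$ over subsets of $R_k$ but is not globally optimal, submodularity does guarantee some $i \in S^*\setminus R_k$ with $f(\hat S_k \cup \{i\}) < f(\hat S_k)$, but there may be only $O(s)$ such elements among $n-|R_k|$ candidates, so ``random sampling over $[n]\setminus R_k$'' needs $\Omega(n/s)$ draws in expectation to hit one, and a full marginal scan costs $\Theta(n)$ queries --- each marginal $f(T\cup\{i\})-f(T)$ is a separate oracle call, so you cannot group-test these sums the way the paper batches \emph{contiguous subgradient coordinates} via the telescoping identity of \cref{lemma:sum}. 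Second, the phase count: each failed certification only yields \emph{one} new element of $S^*$, so you may need $\Theta(s)$ phases, not $O(\log s)$; doubling $|R_k|$ by adding arbitrary elements does nothing to absorb $S^*$. You acknowledge this step is ``delicate'' but supply no mechanism for it, and it is precisely the content of the theorem.

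The paper avoids support identification entirely. It runs projected stochastic subgradient descent on the Lovasz extension over the restricted domain $S^s=\{x\in[0,1]^n:\sum_i x_i\le s\}$, so that in \cref{thm:grad} one has $R^2=s$ rather than $n$, $B^2=\O(M^2)$, and $\eps=1/2$ (exactness then follows from integrality, as in your rescaling observation), giving $T=\O(sM^2)$ iterations. The only remaining $n$-dependence is the initial subgradient at the origin, which the paper removes by exploiting that \emph{every} permutation is consistent with $x_0=0$: \cref{algo:findperm} randomly isolates nonzero coordinates of some subgradient (which is $O(M)$-sparse by integrality and \cref{lemma:l1}) and locates each by binary search using \cref{lemma:sum}, computing all nonzero entries in $\O(M^2)$ oracle calls. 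If you want to salvage your outline, you would need to replace the per-element marginal tests with a sum-based binary search of exactly this kind; at that point you have essentially rederived the paper's argument, and the iterative restriction scaffolding becomes unnecessary.
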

Note that for small $M$ (say $M = \O(1)$) and $s = O(n^{1-\delta})$ for some $\delta > 0$, this algorithm uses a number of oracle calls to $f$ which is \emph{sublinear} in $n$. This is the first sublinear time algorithm for SFM. 
The previous bottleneck for obtaining such sublinear results was that it seemed necessary to compute a full subgradient of the Lovasz extension, a well-known continuous extension of submodular functions, which naively requires $\Omega(n)$ oracle calls. We overcome this by designing an algorithm that computes all $O(M)$ nonzero entries of the subgradient at $0$ with $\O(M^2)$ oracle calls.

These results makes progress towards completing the picture for SFM algorithms: strongly polynomial algorithms use a cubic number of queries, weakly polynomial algorithms use a quadratic number of queries, and pseudopolynomial/approximate algorithms make a linear number of queries. 

Following the work of Bach \cite{Bach19}, our results extend to a more general class of submodular functions not necessarily defined on $\{0, 1\}^n$, such as those defined on $[k]^n$ for a positive integer $k$. Leveraging this result, we obtain a nearly linear time algorithm for computing approximate minimizers of a class of nonconvex functions studied by Bach \cite{Bach19}, improving upon the cubic running time given in that paper. 

\begin{theorem}[Nearly linear time continuous submodular function minimization]
\label{thm:nonconvex}
Let $f: [0, 1]^n \to \R$ be a twice differentiable function with  $\frac{\partial^2 f(x)}{\partial x_i \partial x_j} \le 0$ for all $i \neq j$. There is an algorithm that computes an $\eps$-additive approximate minimizer of $f$ in $\O(nL^6 / \eps^6)$ function evaluation calls w.h.p., where $L$ is the $L^\infty$-Lipschitz constant of $f$.
\end{theorem}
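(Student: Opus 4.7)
The plan is to reduce continuous submodular minimization to the discrete setting of \cref{thm:main} by a sufficiently fine gridding of $[0,1]^n$, and then invoke the $[k]^n$ extension of \cref{thm:main} referenced just before the theorem statement. The reduction proceeds by choosing a grid resolution $k = \poly(L/\eps)$ and defining $\hat f : \{0,1,\ldots,k\}^n \to \R$ by $\hat f(y) = f(y/k)$.

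Two facts drive the analysis. Structurally, the hypothesis $\partial^2 f/\partial x_i \partial x_j \le 0$ for $i \ne j$ implies discrete submodularity of $\hat f$ in the product order: integrating the mixed-partial inequality over the axis-aligned rectangle with corners $y/k$ and $(y+e_i+e_j)/k$ yields the diminishing-returns inequality
\[
\hat f(y + e_i + e_j) - \hat f(y + e_j) \le \hat f(y + e_i) - \hat f(y) \quad \text{for all } i \ne j.
\]
Quantitatively, the $L^\infty$-Lipschitz hypothesis bounds the discretization error $\bigl|\min_y \hat f(y) - \min_{x \in [0,1]^n} f(x)\bigr|$ by $L/(2k)$, obtained by rounding any continuous minimizer coordinate-wise to the nearest grid point, and bounds the range of $\hat f$ on its domain by $L$ (since the $L^\infty$-diameter of $[0,1]^n$ is one).

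Next I would invoke the $[k+1]^n$-generalization of \cref{thm:main} on $\hat f$. A standard route is the chain encoding $y_i \mapsto (\mathbbm{1}[y_i \ge j])_{j=1}^k$, which recasts $\hat f$ as a submodular function on $\{0,1\}^{nk}$ whose minima agree with those of $\hat f$; after rescaling by $1/L$ to land in $[-1,1]$ and targeting an internal additive error $\eps' \asymp \eps/L$, applying \cref{thm:main} gives an oracle-call bound of the form $\O(n\cdot\poly(L/\eps))$, with each call translating directly to one evaluation of $f$ at a grid point. The w.h.p.\ guarantee is obtained, as in the remark following \cref{thm:main}, by running the expected-error routine $O(\log n)$ times and returning the best output, so expected $(\eps/2)$-approximation combined with the $L/(2k)$ discretization slack yields $\eps$-approximation of $\min f$ with high probability.

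The main obstacle I expect is the careful parameter calibration together with verifying that submodularity actually lifts through the binary encoding: the constant $L$ enters simultaneously as the range of $\hat f$ (forcing an $\eps/L$ rescaling), as the discretization error factor (forcing $k \gtrsim L/\eps$), and potentially again through an internal Lovasz-extension step on the expanded $\{0,1\}^{nk}$ instance, all of which combine to produce the $L^6/\eps^6$ polynomial factor claimed in the theorem. Matching that exponent exactly—rather than settling for a weaker bound like $L^3/\eps^3$—will require being precise about which version of the $[k]^n$ extension is used and how the chain-encoded submodular function is constructed so that its submodularity is genuine on all of $\{0,1\}^{nk}$ and not merely on the monotone subchain.
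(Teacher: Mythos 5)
Your first half matches the paper: the paper also sets $k = 2L/\eps$, defines $f'(y) = f(y/k)$ on $[k]^n$, bounds the discretization error by $L/k \le \eps/2$ via the $L^\infty$-Lipschitz hypothesis, and notes that the range of $f'$ is $[-L,L]$. Where you diverge is the crucial step of actually minimizing the discretized function. The paper does \emph{not} reduce $[k]^n$ to $\{0,1\}^{nk}$; it runs its own $[k]^n$ algorithm (\cref{thm:maink}), built on Bach's continuous extension over $H_k^n$ with the analogues of the $\ell_1$-bound, monotonicity, and interval-sum lemmas (\cref{lemma:l1k,lemma:changek,lemma:sumk,lemma:samplek}). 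The exponent $6$ then comes from $R^2 = nk$, $B^2 = O(k^2L^2)$, and an extra $\O(k)$ preprocessing cost per iteration, giving $\O(nk^4L^2/\eps^2) = \O(nL^6/\eps^6)$.

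Your proposed route through a chain encoding $y_i \mapsto (\mathbbm{1}[y_i \ge j])_{j=1}^{k}$ has a genuine gap that you flag but do not close, and it is not a mere calibration issue. Any extension of $\hat f$ from the monotone subchains to all of $\{0,1\}^{nk}$ must be submodular there, and the natural candidates fail: for instance $\tilde f(S) = \hat f(|S_1|,\dots,|S_n|)$ is submodular on $\{0,1\}^{nk}$ only if $\hat f$ is in addition discretely concave along each single coordinate, which does \emph{not} follow from the hypothesis $\partial^2 f/\partial x_i\partial x_j \le 0$ for $i \neq j$ (the diagonal second partials are unconstrained; e.g.\ $f(x) = x^\top Q x$ with positive diagonal and negative off-diagonal entries is in the class but is strictly convex along each coordinate). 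Repairing this --- say by restricting attention to down-sets via a penalty, or by working with the lattice (weak) submodularity directly --- is exactly the difficulty that Bach's continuous extension is designed to sidestep, and it is the content of \cref{sec:sfmk}. Until that step is supplied, the claimed $\O(n\cdot\poly(L/\eps))$ bound is not established; and note also that your parenthetical calling $L^3/\eps^3$ a ``weaker bound'' than $L^6/\eps^6$ has the comparison backwards.
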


\subsection{Previous Work}
The first polynomial time algorithm for SFM was via the ellipsoid algorithm \cite{GLS84}. This spawned a line of work on faster algorithms (see \cref{fig:results} for the state-of-the-art bounds) and combinatorial algorithms \cite{Cun85, IFF00, Sch00, IO09}, which were achieved later. Building on a long line of work on SFM, Lee \etal~\cite{LSW15} gave the current state-of-the-art running times for weekly polynomial SFM, $O(n^2 \log nM \cdot \eo + n^3 \log^{O(1)}nM)$, and strongly polynomial SFM, $O(n^3 \log^2 n \cdot \eo + n^4 \log^{O(1)} n)$. See \cite{LSW15} for more comprehensive coverage of previous improvements.

Additionally, there has been work towards understanding \emph{pseudopolyomial} algorithms for SFM. Specifically, the Fujishige-Wolfe \cite{Wolfe76, Fujis80} algorithm which is often used in practice can be shown to run in pseudopolynomial time $O(n^2M^2 \cdot \eo + n^3M^2)$ \cite{CJK14, LJJ15}. More recently, Chakrabarty \etal~\cite{CLSW17} gave a nearly linear pseudopolynomial algorithms for SFM with runtime $\O(nM^3 \cdot \eo)$. Additionally, they studied the problem of approximate SFM, that is minimizing a real-valued submodular function $f$ with range $[-1, 1]$ to additive $\eps$ error. They achieved a subquadratic $\O(n^{5/3} \eps^{-2} \cdot \eo)$ time algorithm for this problem. They also studied SFM in the case where $f$ is known to have a $s$-sparse minimizer, i.e. the minimizer of $f$ has only $s$ nonzero entries, achieving an $\O \left ( (n + s n ^{2/3}) \eo \epsilon^{-2} \right )$ algorithm.

Simultaneously with this work, Hamoudi \etal~\cite{HRRS19} improved the runtime of approximate SFM to $\O(n^{3/2}\eps^{-2} \cdot \eo)$. Additionally, they also achieved a $\O(n^{5/4}\eps^{-5/2} \cdot \eo)$ quantum algorithm for approximate SFM through a new method for sampling with high probability
$T$ independent elements from any discrete probability distribution of support size $n$ in time $O(\sqrt{Tn}).$

\begin{table}[h]
\centering{}%
\begin{tabular}{|c||c|c|c|}
\hline 
Regime & {\small{}Previous Best Running Time} & {\small{}Our Result}\tabularnewline
\hline 
\hline 
{\small{}Strongly Polynomial} & \multicolumn{2}{c|}{{\small{}$O(n^{3}\log^{2}n\cdot\eo+n^{4}\log^{O(1)}n)$ \cite{LSW15}}}\tabularnewline
\hline 
{\small{}Weakly Polynomial} & \multicolumn{2}{c|}{{\small{}$O(n^{2}\log nM\cdot\eo+n^{3}\log^{O(1)}nM)$\cite{LSW15}}}\tabularnewline
\hline 
{\small{}Pseudopolynomial} & {\small{}$\O(nM^{3}\cdot\eo)$\cite{CLSW17}} & \textbf{\small{}$\O(nM^2\cdot\eo)$}\tabularnewline
\hline 
\multirow{1}{*}{{\small{}$\eps$-Approximate}} & \multirow{1}{*}{{\small{}$\O(n^{3/2}\cdot\eo/\eps^2)$\cite{HRRS19}}} & \textbf{\small{}$\tilde{O}(n\cdot\eo/\eps^2)$} \tabularnewline
\hline
\multirow{1}{*}{{\small{}Sparse Pseudopolynomial}} & \multirow{1}{*}{{\small{}$\O((n+sM^3)\cdot\eo)$\cite{CLSW17}}} & \textbf{\small{}$\O(sM^2 \cdot \eo)$} \tabularnewline
\hline
\end{tabular}\caption{\label{fig:results} {\small{}Running times for minimizing a submodular function $f$ on subsets of an $n$ element set. $\eo$ denotes the time needed to make an oracle call to $f$. In all but the approximate SFM regime, $f$ is integer valued with maximum absolute value $M$. In the approximate SFM regime, $f$ is real valued with range $[-1, 1]$. $s$ is the sparsity of the minimizer of $f$. Table adapted from \cite{CLSW17}.}}
\end{table}

Additionally, there has been work towards extending the notion of submodularity beyond functions defined on subsets of a universe $U$. Bach \cite{Bach19} has shown that the notion of submodularity extends naturally to functions defined on $[k]^n$ (instead of $\{0,1\}^n$) and even to functions defined on continuous domains such as $[0, 1]^n.$ This work shows how to extend the classical polynomial time algorithms for submodular optimization to this setting, and gives polynomial time algorithms for optimizing a large class of nonconvex functions.

\subsection{Organization}
For the remainder of the introduction, we give an overview for our techniques in \cref{sec:overview}.
In \cref{sec:prelim}, we state the necessary preliminaries for our algorithms.
In \cref{sec:sfm2} we give our main algorithm for nearly linear time SFM and prove \cref{thm:sparse}.
In \cref{sec:sparse} we give our sublinear time algorithms for pseudopolynomial SFM when the minimizer is sparse and prove \cref{thm:sparse}.
Finally, in \cref{sec:sfmk} we extend our earlier results to submodular functions on the domain $[k]^n$ and $[0, 1]^n$, and prove \cref{thm:nonconvex} in \cref{sec:nonconvex}.



\subsection{Overview}
\label{sec:overview}
Here we give a less technical overview of the ideas behind our algorithm. For simplicity, we only describe our algorithm in the situation when $f$ is a standard submodular function on $\{0,1\}^n$. For more technical discussion and discussion about the sparse regime and submodular functions over $[k]^n$, see \cref{sec:sfm2}, \cref{sec:sparse}, and \cref{sec:sfmk}.

Our algorithms, like those of \cite{CLSW17}, are based on projected stochastic subgradient descent on the Lovasz extension of $f$, which is a well-known continuous convex extension of $f$ (see \cref{def:lovasz}). The algorithms of \cite{CLSW17} exploited submodularity to build a data structure and get gradient updates with fewer evaluations than the $O(n)$ required by the na\"ive method. 
To obtain our result we leverage the techniques built by \cite{CLSW17} but show that a more efficient binary tree based data structure can be built to support gradient estimates with little preprocessing. 

More precisely, the algorithm of \cite{CLSW17} computes $x_0, x_1, \cdots, x_T$, a (stochastic) sequence of points, where each $x_{i+1}$ is computed by taking a stochastic subgradient step from $x_i$.
To do this, the algorithm writes the gradient at $x_{i+1}$ (we'll denote it as $g(x_{i+1})$) as the sum of smaller terms of the form $g(x_a) - g(x_b)$ and $g(x_0)$, estimates each, and sums them. As the number of terms summed increases, the variance of the estimate grows and the convergence rate of subgradient descent decreases. To achieve there fastest algorithm \cite{CLSW17} thereby trades off leveraging such stochastic estimates and recomputing the initial estimator.

In this paper we improve this datastructure by, as we step through the trajectory $x_1,\ldots, x_T$, choosing to evaluate $g(x_a) - g(x_b)$ at carefully chosen intervals along the trajectory. This allows us to amortize the maintenance of the data structure while simultaneously maintaining a low variance of the resulting stochastic gradient.   
This leads to our nearly linear time algorithm. 
We hope that this general framework of using data structures to maintain the ability to do point updates and sample gradients can find uses in other optimization methods where we desire sublinear gradient calls, such as coordinate descent.

In order to extend our results to the domain $[k]^n$, we use the continuous extension of a submodular function $f$ developed by Bach \cite{Bach19}, which is the analogue of the Lovasz extension. We show that our algorithms extend to this setting.

Finally, we explain our key ingredient to obtaining sublinear time algorithms in the regime where $f$ is integer valued with maximum absolute value $M$ and has a sparse minimizer. The main idea behind the algorithm is to compute an initial subgradient at $0$ that doesn't require computing all $n$ coordinates of the subgradient, which na\"ively requires $n$ function calls. To do this, we use that the origin has many subgradients, and develop an algorithm that can find one such subgradient for which we can compute all its nonzero entries in $\O(M^2)$ function calls. After this, we can simply plug this initial subgradient into our earlier algorithms and get the desired result.

\section{Preliminaries}
\label{sec:prelim}

Here we provide notation and basic facts about classic submodular functions. Preliminaries for submodular functions on $[k]^n$ and continuous submodular functions are deferred to \cref{sec:sfmk}.

\paragraph{Miscellaneous notation.} We let $[n] \defeq \{1, 2, \dots, n\}$. For $a, b \in \R$ we let $[a, b] \defeq \{x : a \le x \le b\}.$ For permutation $P = \{P_1, P_2, \cdots, P_n\}$ of $[n]$, we let $P[j] \defeq \{P_1, P_2, \cdots, P_j\}$ be the set containing the first $j$ elements of $P$. For a point $x \in \R^n$ we call a permutation $P$ of $[n]$ \emph{consistent} with $x$ if $x_{P_1} \ge x_{P_2} \ge \cdots \ge x_{P_n}.$ We let $e_1, e_2, \cdots, e_n$ denote the standard basis vectors for $\R^n$, so that $e_i$ is the vector with a $1$ in the $i$-th coordinate and $0$ in all other coordinates. We call a vector $s$-sparse if it has at most $s$ nonzero entries.

\paragraph{Submodular functions.} Let $\{0,1\}^n \subseteq \R^n$ denote the set of $n$-tuples, where each coordinate is either $0$ or $1$. There is a natural bijection between $x \in \{0,1\}^n$ and subsets $S \subseteq [n]$ where $x_i$ being $1$ corresponds to element $i$ being in the set. We use these interchangeably. Throughout, we let $f: \{0, 1\}^U \to \R$ be the submodular function we are trying to optimize, where $U$ is a ground set. Without loss of generality, we assume $U = [n]$. Additionally, we assume that $f(\emptyset) = 0$, which we can enforce by subtracting a constant from all values of $f$ while preserving submodularity. We say that a function $f: \{0, 1\}^n \to \R$ is submodular if it satisfies the property of decreasing marginal returns, specifically for all sets $S \subseteq T \subseteq [n]$ and element $i \not\in T$, we have 
\[ f(S \cup \{i\})-f(S) \ge f(T\cup \{i\})-f(T). \] An alternate (but equivalent) definition is that for all $S, T \subseteq [n]$ we have that
\[ f(S)+f(T) \ge f(S\cap T)+f(S\cup T). \]
In this work, we measure the complexity of our algorithms through the number of calls we make to an evaluation oracle for $f$, as the additional runtime of all new algorithms in this paper can be nearly linear in the number of oracle calls.

\paragraph{Lovasz extension.}
 The Lovasz extension is a well-known continuous, convex extension of a submodular function $f: \{0, 1\}^n \to \R$ to a function $\hf: [0, 1]^n \to \R.$
We now state its definition.

\begin{definition}[Lovasz extension] 
\label{def:lovasz}
Given a submodular function $f:\{0,1\}^n \to \R$, the \emph{Lovasz extension of $f$}, denoted as $\hf:[0, 1]^n \to \R$, is defined for any $x \in [0, 1]^n$ as
\begin{equation} \label{eq:lov} \hf(x) = \sum_{j=1}^n (f(P[j])-f(P[j-1]))x_{P_j}, \end{equation}
where $P = \{P_1, P_2, \cdots, P_n\}$ is a permutation which is consistent with $x$.
\end{definition}

We leverage the following well known properties of the Lovasz extension \cite{Lov83, Fuji05}.
\begin{theorem}
\label{thm:lovasz}
Let $f: \{0, 1\}^n \to \R$ be a submodular function, and let $\hf$ be its Lovasz extension. We have that: $\hf$ is convex; for all $x \in \{0, 1\}^n$, $\hf(x) = f(x)$; and $\min_{x \in [0, 1]^n} \hf(x) = \min_{S \subseteq [n]} f(S).$
Additionally, the vector $g(x) \in \R^n$ defined by $g(x)_{P_j} \defeq f(P[j])-f(P[j-1])$ for $1 \le j \le n$ is a subgradient of $\hf$ at x, where $P = (P_1, P_2, \cdots, P_n)$ is any permutation consistent with $x$.
\end{theorem}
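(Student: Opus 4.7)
The plan is to handle the three claims in order of increasing difficulty, reducing everything to a single combinatorial inequality about $g(x)$.

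First, I would verify that $\hf(x) = f(x)$ for $x \in \{0,1\}^n$: if $x$ is the indicator of $S$, then any consistent permutation $P$ lists the elements of $S$ first, so in \eqref{eq:lov} the terms with $j > |S|$ vanish because $x_{P_j} = 0$, and the remaining sum telescopes to $f(S) - f(\emptyset) = f(S)$.

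The main technical step is to show that the greedy vector satisfies $g(x)(Q) \defeq \sum_{i \in Q} g(x)_i \le f(Q)$ for every $Q \subseteq [n]$. To prove this, enumerate $Q = \{P_{j_1},\ldots,P_{j_k}\}$ with $j_1 < \cdots < j_k$, so that $g(x)(Q) = \sum_{\ell=1}^{k}\bigl(f(P[j_\ell]) - f(P[j_\ell - 1])\bigr)$. Since $P[j_\ell - 1] \supseteq \{P_{j_1},\ldots,P_{j_{\ell-1}}\}$ (as $j_{\ell-1} < j_\ell$), decreasing marginal returns yields
\[
f(P[j_\ell]) - f(P[j_\ell-1]) \le f(\{P_{j_1},\ldots,P_{j_\ell}\}) - f(\{P_{j_1},\ldots,P_{j_{\ell-1}}\}),
\]
and summing over $\ell$ telescopes the right-hand side to $f(Q) - f(\emptyset) = f(Q)$. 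I expect this polytope-containment step to be the main obstacle; once in hand, everything else follows by routine manipulations.

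Next, I would pass to a threshold representation of $\hf$. Reindexing \eqref{eq:lov} as a sum over level sets (using that $\{i : x_i > t\} = P[j]$ on $t \in [x_{P_{j+1}},x_{P_j})$, with the convention $x_{P_{n+1}} \defeq 0$) and invoking $f(\emptyset) = 0$ gives $\hf(x) = \int_0^1 f(\{i : x_i > t\})\,dt$. For any $y \in [0,1]^n$, writing $y_i = \int_0^1 \mathbf{1}\{y_i > t\}\,dt$ and swapping the sum and integral similarly yields $g(x)^T y = \int_0^1 g(x)(\{i : y_i > t\})\,dt$. Combining these with the inequality proved above,
\[
\hf(y) - g(x)^T y = \int_0^1 \Bigl(f(\{i : y_i > t\}) - g(x)(\{i : y_i > t\})\Bigr)\,dt \ge 0.
\]
Since $g(x)^T x = \hf(x)$ is immediate from \eqref{eq:lov}, I deduce the subgradient inequality $\hf(y) \ge \hf(x) + g(x)^T(y - x)$; existence of a subgradient at every point in turn gives convexity. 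Finally, for $\min_{x \in [0,1]^n}\hf(x) = \min_{S\subseteq [n]} f(S)$ the $\le$ direction is immediate from $\hf = f$ on $\{0,1\}^n$, and the $\ge$ direction follows from the integral representation, using $f(\emptyset) = 0 \ge \min_S f(S)$ to lower-bound the integrand pointwise by $\min_S f(S)$.
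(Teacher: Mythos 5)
Your proposal is correct. Note that the paper does not prove \cref{thm:lovasz} at all --- it cites these as well-known properties from \cite{Lov83, Fuji05} --- so there is no in-paper argument to compare against; what you have written is the classical Edmonds--Lov\'asz proof. Your key lemma, that the greedy vector satisfies $\sum_{i\in Q} g(x)_i \le f(Q)$ for all $Q$ (i.e.\ $g(x)$ lies in the submodular polyhedron), is exactly the right pivot: combined with the level-set representation $\hf(y) = \int_0^1 f(\{i : y_i > t\})\,dt$ and $g(x)^T y = \int_0^1 g(x)(\{i: y_i>t\})\,dt$, it gives $\hf(y) \ge g(x)^T y$ for all $y$, and together with $g(x)^T x = \hf(x)$ this yields the subgradient inequality; convexity then follows since $\hf$ is exhibited as a pointwise supremum of affine functions each touching its graph. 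Two small points worth making explicit if you write this up: both the identity $\hf(x)=f(x)$ on $\{0,1\}^n$ and the integral representation use the paper's standing normalization $f(\emptyset)=0$ (otherwise $\hf(x) = \int_0^1 f(\{i:x_i>t\})\,dt - f(\emptyset)$), and in the diminishing-returns step you should note that $P_{j_\ell} \notin P[j_\ell-1]$ so that the marginal-returns inequality applies. Neither is a gap, just bookkeeping.
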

Note that the vector $g(x)$ as defined in \cref{thm:lovasz}, despite being a subgradient of $\hf$ of $x$, only depends on $P$. Thus, sometimes we define the \emph{gradient (at zero) associated with permutation $P$}, denoted $g^P$, as $g^P_{P_j} \defeq f(P[j])-f(P[j-1])$ for $1 \le j \le n.$ 

We now explain (and this is standard) that given a point $x \in [0, 1]^n$, we can find a set $S \subseteq [n]$ with $f(S) \le \hf(x)$ in $O(n)$ oracle calls to $f$. In other words, we only need to pay an extra $\O(n)$ oracle calls to convert an approximate minimizer of the Lovasz extension $\hf$ of $f$ to an approximate minimizer of $f$ itself. For completeness, we state this as a lemma and prove it below.

\begin{lemma}[Going from $\hf$ to $f$]
\label{lemma:discretize}
For a point $x \in [0,1]^n$, we can in $O(n)$ oracle calls compute a set $S$ such that $f(S) \le \hf(x).$ In particular, we can go from an $\eps$-additive approximate minimizer of $\hf$ to an $\eps$-additive approximate minimizer of $f$ in $O(n)$ oracle calls.
\end{lemma}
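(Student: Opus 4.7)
The plan is to exhibit $\hat f(x)$ as a convex combination of $f$-values on the level sets of $x$, so that one of those level sets must attain $f$-value at most $\hat f(x)$. Concretely, fix a permutation $P = (P_1, \dots, P_n)$ consistent with $x$, so $x_{P_1} \ge x_{P_2} \ge \cdots \ge x_{P_n}$, and consider the $n+1$ candidate sets $P[0] = \emptyset, P[1], \dots, P[n]$. Each of these can be evaluated via a single call to the oracle, so computing $f(P[j])$ for all $j$ takes $n+1$ oracle calls. Sorting to obtain $P$ requires no oracle calls.

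Starting from the definition in~\eqref{eq:lov}, I would rewrite
\[
\hat f(x) = \sum_{j=1}^n \bigl(f(P[j])-f(P[j-1])\bigr)x_{P_j}
\]
by reindexing the second sum ($j \mapsto j-1$) and using $f(P[0]) = f(\emptyset) = 0$ to obtain the ``Abel summation'' form
\[
\hat f(x) = \sum_{j=1}^{n-1} (x_{P_j} - x_{P_{j+1}}) f(P[j]) + x_{P_n} f(P[n]).
\]
Adding the identically zero term $(1 - x_{P_1}) f(P[0])$ gives
\[
\hat f(x) = (1 - x_{P_1}) f(P[0]) + \sum_{j=1}^{n-1} (x_{P_j} - x_{P_{j+1}}) f(P[j]) + x_{P_n} f(P[n]).
\]
Since $P$ is consistent with $x$ and $x \in [0,1]^n$, the coefficients $(1-x_{P_1}),\, (x_{P_j}-x_{P_{j+1}}),\, x_{P_n}$ are all nonnegative, and they sum to $1$ by telescoping. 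Hence $\hat f(x)$ is a convex combination of $\{f(P[j])\}_{j=0}^n$.

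The algorithm is then to output $S \in \argmin_{0 \le j \le n} f(P[j])$; by the convex combination above, $f(S) \le \hat f(x)$. This uses $n+1 = O(n)$ oracle calls (plus $O(n \log n)$ comparison work to sort). For the final sentence of the lemma, if $x$ is an $\eps$-additive approximate minimizer of $\hat f$, then by \cref{thm:lovasz} we have $\hat f(x) \le \min_{y \in [0,1]^n} \hat f(y) + \eps = \min_{T \subseteq [n]} f(T) + \eps$, so the returned $S$ satisfies $f(S) \le \min_T f(T) + \eps$. There is no real obstacle here; the only thing to be careful about is the boundary handling of the coefficients ($j=0$ and $j=n$), which is why I explicitly pad with the $(1-x_{P_1})$ term and use $f(\emptyset)=0$.
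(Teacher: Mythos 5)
Your proof is correct and follows essentially the same route as the paper: both rewrite $\hf(x)$ via Abel summation as a nonnegative combination of $f(\emptyset), f(P[1]), \dots, f(P[n])$ and output the minimizing prefix set. Your explicit padding with the $(1-x_{P_1})f(P[0])$ term to obtain a genuine convex combination is a slightly cleaner way to justify the final inequality, but it is the same argument.
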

\begin{proof}
We can rewrite \cref{eq:lov} as
\[ \hf(x) = f(P[n])x_{P_n} + \sum_{j=1}^{n-1} f(P[j])(x_{P_j}-x_{P_{j+1}}), \] thus $\hf(x)$ is a non-negative linear combination of $f(\emptyset), f(P[1]), f(P[2]), \cdots, f(P[n])$. Therefore, either $f(\emptyset) \le \hf(x)$ or there is an $1 \le i \le n$ with $f(P[i]) \le \hf(x)$, as desired. 
\end{proof}

\paragraph{Subgradient descent.} Our algorithms are primarily based on (projected stochastic) subgradient descent. For a convex function $f$ on a convex compact set $S \subseteq \R^n$, we say that a vector $g$ is a \emph{subgradient} of $f$ at $x \in S$ if for all $y \in S$ we have that
\[ f(y)-f(x) \ge g^T (y-x). \] We let $\partial f(x)$ be the set of all subgradients of $f$ at $x$. A \emph{subgradient oracle} for $f$ is an algorithm which at a point $x \in S$ returns a vector $g$ with $g \in \partial f(x).$ A \emph{stochastic subgradient oracle} for $f$ is an algorithm which at a point $x \in S$ returns a stochastic vector $\bg(x)$ with $\E[\bg(x)] \in \partial f(x)$.\footnote{Throughout, we use boldface (e.g. $\bg$) for stochastic variables and normal text (e.g. $g$) for not stochastic variables}  Finally, intermediate points computed during projected subgradient descent may lie outside $S$. We define the \emph{projection} of a point $y$ onto $S$ to be \[ \proj(y, S) = \argmin_{x \in S} \| x - y \|_2^2. \] We now state a theorem which contains the guarantees of projected stochastic subgradient descent which we use. The version we state is adapted from \cite{Bubeck15} and suffices for our purposes.

\begin{theorem}[Projected stochastic subgradient descent \cite{Bubeck15}]
\label{thm:grad}
Let $f$ be a convex function on a compact convex set $S \subseteq \R^n$ and $\bg$ be a stochastic subgradient oracle for $f$. Define parameters $R^2 \defeq \max_{x \in S} \frac{1}{2}\|x\|_2^2$, $B$ such that $\E[\|\bg(x)\|_2^2] \le B^2$ for all $x \in S$ and consider the following iterative algorithm
\[ x_1 \defeq \argmin_{x \in S} \|x\|_2^2 \] 
and \[ x_{i+1} = \proj(x_i - \eta \bg(x_i), S) \text{ for } i \in [T - 1]  \] Then for $\eta = \frac{R}{B}\sqrt{\frac{2}{T}}$, we have that \[ \E\left[f\left(\frac{1}{T}\sum_{i=1}^T x_i\right)\right] \le \min_{x \in S} f(x) + RB\sqrt{\frac{2}{T}}. \]
\end{theorem}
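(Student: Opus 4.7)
The plan is to follow the standard analysis of projected stochastic subgradient descent, as in Bubeck's monograph, with the constants tuned to the setup above. Let $x^\star \in \argmin_{x \in S} f(x)$; this exists by compactness. The starting point is the one-step distance-to-optimum inequality: since $x^\star \in S$ and Euclidean projection onto a convex set is nonexpansive,
\[ \|x_{i+1} - x^\star\|_2^2 = \|\proj(x_i - \eta \bg(x_i), S) - x^\star\|_2^2 \le \|x_i - \eta \bg(x_i) - x^\star\|_2^2, \]
which on expansion becomes $\|x_i - x^\star\|_2^2 - 2\eta\, \bg(x_i)^T(x_i - x^\star) + \eta^2 \|\bg(x_i)\|_2^2$.

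Next, I would take conditional expectation given $x_i$ so that $\E[\bg(x_i) \mid x_i] \in \partial f(x_i)$, and then apply the subgradient inequality $\E[\bg(x_i) \mid x_i]^T(x_i - x^\star) \ge f(x_i) - f(x^\star)$ together with the variance bound $\E\|\bg(x_i)\|_2^2 \le B^2$. Taking total expectations gives
\[ 2\eta\, \E[f(x_i) - f(x^\star)] \le \E\|x_i - x^\star\|_2^2 - \E\|x_{i+1} - x^\star\|_2^2 + \eta^2 B^2. \]
Summing over $i = 1, \dots, T$, the distance terms telescope, leaving $\E\|x_1 - x^\star\|_2^2 + T\eta^2 B^2$ on the right. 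To bound the initial term by $2R^2$, I would use the specific initialization: $x_1 = \proj(0, S)$, so nonexpansiveness of projection gives $\|x_1 - x^\star\|_2 \le \|0 - x^\star\|_2 = \|x^\star\|_2 \le \sqrt{2}\,R$ by definition of $R$.

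Dividing by $2\eta T$ and using convexity of $f$ (Jensen's inequality applied to the average iterate) yields
\[ \E\!\left[f\!\left(\tfrac{1}{T}\sum_{i=1}^T x_i\right)\right] - \min_{x \in S} f(x) \le \frac{R^2}{\eta T} + \frac{\eta B^2}{2}, \]
and substituting $\eta = \frac{R}{B}\sqrt{2/T}$ balances the two terms to give $RB\sqrt{2/T}$, as claimed. The only subtlety beyond the deterministic analysis is the conditioning step that converts the stochastic oracle into a genuine subgradient in expectation, which is why the hypothesis only requires $\E[\bg(x)] \in \partial f(x)$ and a second-moment (rather than almost-sure) bound on $\|\bg(x)\|_2$; everything else is routine, and no step should present any real obstacle.
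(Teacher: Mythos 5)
Your proof is correct and is exactly the standard telescoping analysis of projected stochastic subgradient descent; the paper does not prove this theorem itself but imports it from \cite{Bubeck15}, and your argument (nonexpansiveness of projection, conditioning to convert the stochastic oracle into a subgradient in expectation, bounding $\|x_1-x^\star\|_2^2\le 2R^2$ via the initialization at $\proj(0,S)$, then Jensen and balancing $\eta$) is precisely the one given there. The constants check out, so there is nothing to add.
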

Note that for $T = 2R^2 B^2/\eps^2$ in \cref{thm:grad} we achieve additive error $\eps$ off the minimum function value in expectation.

\section{Submodular Function Minimization over $\{0, 1\}^n$}
\label{sec:sfm2}
In this section we present our improved algorithm for SFM over $\{0, 1\}^n$. For a submodular function $f:\{0,1\}^n \to \R$, our algorithms perform projected stochastic subgradient descent on the Lovasz extension $\hf$ of $f$.

Looking at the guarantees of \cref{thm:grad}, we want to design an algorithm that can compute stochastic subgradients with low expected $\ell_2$ norm without having to make many oracle calls to $f$. Specifically, in the case of \cref{thm:main}, we show how to construct an algorithm that
\begin{itemize}
\item Computes a sequence of points $x_1, x_2, \cdots, x_T \in [0, 1]^n$ and stochastic subgradients $\bg^i$ of $f$ at $x_i$, where $x_1 = 0$ and $x_{i+1} = \proj(x_i - \eta \bg^i, [0, 1]^n).$
\item Makes $\O(T)$ oracle calls to $f$.
\item Each stochastic subgradient $\bg^i$ is $1$-sparse.
\item Each stochastic subgradient $\bg^i$ has $\E[\| \bg^i \|_2^2] = \O(1).$
\end{itemize}
By the guarantees of \cref{thm:grad}, choosing $T = \O(n/\eps^2)$ suffices to prove \cref{thm:main}, as $R^2 = n$ and $B^2 = \O(1).$

\subsection{Subgradients of the Lovasz extension}
Here we state important results on the struture of the subgradients of the Lovasz extension. We use this structure in order to sample stochastic subgradients of $\hf$ in sublinear time. \cref{lemma:l1} is due to Jegelka and Blimes \cite{JB11} (also Hazan and Kale \cite{HK12}). All of \cref{lemma:l1}, \cref{lemma:change}, and \cref{lemma:sum} were proven in \cite{CLSW17}.

The first lemma is a bound on the $L^1$ norm of the subgradients.
\begin{lemma}
\label{lemma:l1}
For a submodular function $f: \{0, 1\}^n \to [-M, M]$, all subgradients $g$ of the Lovasz extension satisfy $\|g(x)\|_1 \le 3M.$
\end{lemma}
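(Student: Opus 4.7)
The plan is to exploit the fact that the subgradient $g(x)$ produced by \cref{thm:lovasz} is the greedy vertex of the base polytope of $f$. Concretely, writing $g(S) \defeq \sum_{i \in S} g(x)_i$, I would first establish the two inequalities
\[ g(S) \le f(S) \text{ for every } S \subseteq [n], \qquad g([n]) = f([n]). \]
The equality is a direct telescoping of $g(x)_{P_j} = f(P[j]) - f(P[j-1])$ along $P$, using $f(\emptyset)=0$. For the general inequality I would fix $S = \{P_{j_1}, \ldots, P_{j_k}\}$ with $j_1 < \cdots < j_k$ and apply decreasing marginal returns to the nested pair $\{P_{j_1}, \ldots, P_{j_{i-1}}\} \subseteq P[j_i - 1]$ together with the element $P_{j_i}$, yielding $f(P[j_i]) - f(P[j_i - 1]) \le f(\{P_{j_1}, \ldots, P_{j_i}\}) - f(\{P_{j_1}, \ldots, P_{j_{i-1}}\})$. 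Summing over $i$ telescopes the right-hand side to $f(S) - f(\emptyset) = f(S)$.

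Granting those two inequalities, the bound on $\|g(x)\|_1$ reduces to a signed decomposition. Split $[n] = T^+ \sqcup T^- \sqcup T^0$ according to the sign of the entries of $g(x)$, so that $\|g(x)\|_1 = g(T^+) - g(T^-)$. For the positive part the base-polytope inequality gives $g(T^+) \le f(T^+) \le M$. For the negative part I would write
\[ g(T^-) = g([n]) - g([n] \setminus T^-) = f([n]) - g([n] \setminus T^-) \ge -M - M = -2M, \]
using $f([n]) \ge -M$ and $g([n] \setminus T^-) \le f([n] \setminus T^-) \le M$. Combining the two pieces yields $\|g(x)\|_1 \le M + 2M = 3M$, as claimed.

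The only substantive step is the verification that $g(x)$ sits in the base polytope, and that is a short submodularity-plus-telescoping argument; everything else is algebraic bookkeeping. The asymmetric $3M$ bound (rather than $2M$) reflects the fact that the negative part has to be controlled indirectly through the single global constraint $g([n]) = f([n])$, picking up a factor of $M$ twice, while the positive part is bounded by a single application of $g(T^+) \le f(T^+)$.
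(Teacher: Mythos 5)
Your proof is correct, and it is the standard argument: the telescoping/nested-chain step shows the greedy vector lies in the base polytope ($g(S)\le f(S)$ for all $S$, $g([n])=f([n])$, using the standing normalization $f(\emptyset)=0$), after which the $M+2M$ split of the positive and negative parts gives $3M$. The paper does not reprove this lemma (it cites \cite{JB11}, \cite{HK12}, \cite{CLSW17}), but its proof of the $[k]^n$ analogue, \cref{lemma:l1k}, rests on exactly the same submodularity-plus-telescoping step; the only difference is bookkeeping — there the positive and negative parts are each bounded symmetrically by $2M$ without invoking the normalization, which is why that bound is $4M$ per level rather than $3M$.
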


The second lemma allows us to relate the gradients of two points $x, y \in \{0, 1\}^n$ whose difference $x-y$ is a strictly positive (or negative) vector.
\begin{lemma}
\label{lemma:change}
Let $x \in [0, 1]^n$ and let $d \in \R_{\ge 0}^n$ be such that $y = x + d$ (resp. $y = x-d$). Let $S$ denote the non-zero coordinates of $d$. Then for all $i \not\in S$ we have $g(x)_i \ge g(y)_i$ (resp. $g(x)_i \le g(y)_i$).
\end{lemma}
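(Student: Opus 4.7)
The plan is to prove the main case $y = x + d$ with $d \in \R_{\geq 0}^n$; the ``resp.'' case $y = x - d$ then follows by swapping the roles of $x$ and $y$ (applying the main case to $\tilde{x} = y, \tilde{y} = x = \tilde{x} + d$, which flips the inequality to $g(x)_i \le g(y)_i$). Fix $i \notin S$, so $d_i = 0$ and $y_i = x_i$. By \cref{thm:lovasz}, if $P$ is any permutation consistent with $x$ with $P_j = i$, then $g(x)_i = f(A \cup \{i\}) - f(A)$ where $A \defeq P[j-1]$; analogously $g(y)_i = f(B \cup \{i\}) - f(B)$ for $B \defeq Q[j'-1]$ with $Q$ consistent with $y$ and $Q_{j'} = i$. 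The main observation is that if we can choose $P$ and $Q$ so that $A \subseteq B$ (and automatically $i \notin B$ since $Q_{j'} = i$), submodularity in its decreasing-marginal-returns form gives $f(A \cup \{i\}) - f(A) \ge f(B \cup \{i\}) - f(B)$, which is exactly $g(x)_i \ge g(y)_i$.

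To produce such $P$ and $Q$, I would use that since $d \ge 0$, values of coordinates in $S$ weakly increase from $x$ to $y$ while values outside $S$ stay fixed. Classifying each $k \neq i$ by comparison with the unchanged value at $i$: if $x_k > x_i$, then $y_k \ge x_k > y_i$, so $k$ must precede $i$ in any permutation consistent with either $x$ or $y$; if $x_k = x_i$ and $k \in S$, then $y_k > y_i$ forces $k$ before $i$ in $Q$; and if $x_k = x_i$ and $k \notin S$, then $y_k = y_i$, so $k$'s placement relative to $i$ in $Q$ is free. Starting from any $P$ consistent with $x$ with $P_j = i$ and $A = P[j-1]$, I would build $Q$ by keeping the same relative order on $[n] \setminus S$ as in $P$ (compatible with $y$ there since values are unchanged) and re-inserting each $k \in S$ according to $y$'s order, choosing the tie-breaking so that every $k \notin S$ with $x_k = x_i$ that was before $i$ in $P$ stays before $i$ in $Q$. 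Every element of $A$ then either strictly dominates $x_i$ (hence is forced into $Q[j'-1]$) or is a non-$S$ coordinate tied with $x_i$ placed before $i$ by our rule (hence also in $Q[j'-1]$). Therefore $A \subseteq B$, and the submodular inequality above concludes.

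The principal obstacle is the careful handling of coordinates tied with $x_i$ (equivalently, $y_i$), since the subgradient is not uniquely determined at such points and an arbitrary pairing of permutations $(P, Q)$ need not satisfy $A \subseteq B$. The argument crucially exploits the freedom to match the tie-breaking rule for non-$S$ coordinates in $Q$ to the rule already used in $P$, and to place ``newly strict'' $S$-elements before $i$. It also makes clear why $i \notin S$ is needed: if $i \in S$, then $y_i > x_i$, so the position of $i$ itself in the sorted order can shift relative to coordinates tied with $x_i$, and the construction establishing $A \subseteq B$ would break down.
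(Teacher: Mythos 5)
Your proof is correct and takes essentially the same route as the paper: the paper defers this lemma to \cite{CLSW17}, but its own proof of the $[k]^n$ analogue (\cref{lemma:changek}) uses exactly your argument --- show that the prefix set preceding $i$ in a permutation consistent with $x$ is contained in the prefix set preceding $i$ in one consistent with $y$, then apply decreasing marginal returns. Your explicit treatment of tie-breaking is more careful than the paper's sketch, which simply asserts that one can pass from one permutation to the other by moving the changed coordinates leftward without touching $i$.
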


The final lemma allows us to efficiently compute the sum of multiple contiguous coordinates of a subgradient.
\begin{lemma}
\label{lemma:sum}
Let $x \in [0, 1]^n$ and let $P$ be the permutation consistent with $x$. Then we have for any integers $1 \le a \le b \le n$ that \[ \sum_{i=a}^b g(x)_{P_i} = f(P[b]) - f(P[a-1]). \]
\end{lemma}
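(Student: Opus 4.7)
The plan is to apply the definition of the subgradient $g(x)$ associated with the permutation $P$ given in \cref{thm:lovasz} and then recognize the resulting sum as a telescoping sum. This is essentially a one-line proof; the main work has already been done in establishing that the vector $g(x)$ with coordinates $g(x)_{P_j} = f(P[j]) - f(P[j-1])$ is a valid subgradient of $\hf$ at $x$.

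More concretely, I would first invoke \cref{thm:lovasz} to write, for each $i \in \{a, a+1, \dots, b\}$,
\[ g(x)_{P_i} = f(P[i]) - f(P[i-1]). \]
Then I would sum these equalities over $i$ from $a$ to $b$:
\[ \sum_{i=a}^b g(x)_{P_i} = \sum_{i=a}^b \bigl( f(P[i]) - f(P[i-1]) \bigr). \]
The right-hand side telescopes: every intermediate term $f(P[i])$ with $a \le i \le b-1$ appears once with a positive sign (from the $i$-th summand) and once with a negative sign (from the $(i+1)$-st summand), leaving only $f(P[b]) - f(P[a-1])$.

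There is no substantive obstacle here beyond careful bookkeeping of indices; in particular, the claim is independent of whether $a=1$ (in which case $P[a-1] = P[0] = \emptyset$ and we use the convention $f(\emptyset) = 0$ established in the preliminaries) or $a > 1$, and holds for any permutation $P$ consistent with $x$. Thus the proof reduces to a direct computation from the definition of $g(x)$, with no need to invoke submodularity or any property of $x$ other than through the consistency of $P$.
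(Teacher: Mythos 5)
Your proof is correct and matches the intended argument: the paper defers this lemma to \cite{CLSW17}, but its proof of the analogous statement over $[k]^n$ (\cref{lemma:sumk}) is exactly the same telescoping computation from the definition of $g(x)$. Nothing further is needed.
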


\subsection{Nearly linear time approximate submodular function minimization}

In this section we provide a nearly linear time algorithm for minimizing a submodular function $f: \{0,1\}^n \to [-1, 1]$ to additive error $\eps.$ We give a randomized algorithm that uses at most $\O(n/\eps^2)$ oracle calls to $f$ that computes a point $x \in [0, 1]^n$ with $\hf(x) \le \min_T f(T) + \eps$, where $\hf$ is the Lovasz extension of $f$.

Our algorithm follows the broad framework of \cite{CLSW17} which minimizes 
the Lovasz extension using projected stochastic subgradient descent. 
By \cref{thm:grad}, this algorithm yields an $\eps$-additive approximate minimizer in $\O(n/\eps^2)$ provided each subgradient has expected $\O(1)$ $\ell_2$ norm.
Because na\"ively computing a full subradient gradient $g$ of $\hf$ at $x$ requires $\Omega(n)$ oracle calls, to achieve our runtime improvements we must do something more sophisticated to compute stochastic subgradients. To overcome this issue, as in \cite{CLSW17}, we leverage \cref{lemma:l1}. This lemma implies that there is stochastic gradient oracle which outputs subgradients which are both sparse and have low $\ell_2$ norm. Indeed, because $\|g\|_1 \le 3$ (by \cref{lemma:l1}) for all subgradients $g$ of the Lovasz extension, we can compute a $1$-sparse stochastic subgradient $\bg$ with $\E[\|\bg\|_2^2] \le \|g\|_1^2 \le 9$: sample $\bg = \mathrm{sign}(g_i)\|g\|_1 e_i$ with probability $|g_i|/\|g\|_1$.

While  \cref{lemma:l1} does give a sparse sparse stochastic subgradient oracle with low $\ell_2$ norm, a na\"ive implementation would require knowing all of $g$ and therefore naively, $\Omega(n)$ oracle calls. To get around this issue, a key insight of \cite{CLSW17} was that if $g$ was guaranteed to have all positive coordinates, we could use a binary search to sample a stochastic subgradient with $O(1)$ variance in $\O(1)$ oracle calls by applying  \cref{lemma:sum} to sample recursively. This procedure simply samples an interval with probability proportional to the sum of its coordinates, computing the sum using \cref{lemma:sum}
(further details are given in the proof of \cref{lemma:sample} in \cref{sec:proofs}). Unfortunately, this  only works in the case where all coordinates of $g$ are positive, as then there is no cancellation when we compute the sum of coordinates in an interval.

However, imagine that we have already computed the gradient $g^0$ at $x_0$ (the starting point of our method) and $x_1 = x_0 - \eta\bg^0$ where $\bg^0$ is a $1$-sparse stochastic subgradient at $x_0$. To sample a stochastic subgradient $\bg^1$ at $x_1$ we write $g^1 = g^0 + (g^1 - g^0).$ In order to sample $\bg^1$, we sample an estimate of $g^0$, an estimate of $g^1 - g^0$, and sum them. Call this estimate $\bd$. If we could efficiently sample a $1$-sparse $O(1)$ variance estimate of $g^1-g^0$, then the resulting estimate $\bd$ would be $2$-sparse with $O(1)$ variance. To get $\bg^1$ simply sample twice a random nonzero coordinate of $\bd$. Thus $\bg^1$ would be $1$-sparse with $O(1)$ variance. 

To efficiently sample an estimate of $g^1 - g^0$, \cite{CLSW17} noted that if the difference $x_1-x_0$ is $1$-sparse, then by submodularity one can show that $g^1-g^0$ can be split into $O(1)$ intervals, each of which is either all positive or all negative. We can then sample this efficiently by the same algorithm for sampling all positive gradients above.
This intuition is formalized and generalized in the following lemma which is a slight modification of Lemma 12 proven in \cite{CLSW17}.
It gives us the ability to efficiently sample a sparse, low variance estimate of $g(x) - g(y)$ where $x-y$ is sparse.
For example, in the paragraph above where $x_1 - x_0$ is $1$-sparse, we could efficiently sample a $1$-sparse estimate of $g^1 - g^0.$

\begin{restatable}{lemma}{restatesample}
\label{lemma:sample}
Let $f:\{0, 1\}^n \to [-1, 1]$ be a submodular function with Lovasz extension $\hf$. Let $g$ denote the subgradients of $\hf$. Let $x, y \in [0, 1]^n$ be vectors such that $y-x$ is $k$-sparse. There is a data structure which after $O(k)$ calls to $f$ of preprocessing, supports the following: sample a $1$-sparse random variable $\bz$ with $\E[\bz] = g(y)-g(x)$ and $\E[\|\bz\|_2^2] = O(1)$ in $\O(1)$ calls to $f$. Preprocessing is called through $\Process(x, y, f)$, and the sampling is called through $\Sample(x, y, f).$
\end{restatable}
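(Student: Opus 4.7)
My plan is to write $g(y)-g(x) = (g(z)-g(x))+(g(y)-g(z))$ where $z = x+d^+$ for the positive part $d^+$ of $y-x$, sample a $1$-sparse unbiased estimator of each piece separately, and combine them by picking one of the two pieces with probability $1/2$ and doubling the result. The point of this decomposition is that $z-x = d^+ \ge 0$ and $y-z = -d^- \le 0$, so by \cref{lemma:change} each piece has \emph{sign-determined} coordinates outside $S := \mathrm{supp}(y-x)$ (of size at most $k$): concretely $g(z)_i - g(x)_i \le 0$ on $[n]\setminus S$, and $g(y)_i - g(z)_i \ge 0$ on $[n]\setminus S$. By \cref{lemma:l1}, each $\|g(\cdot)\|_1 \le 3$, so $\|g(z)-g(x)\|_1, \|g(y)-g(z)\|_1 \le 6$ and the combined estimator $\bz$ will have $\E[\bz] = g(y)-g(x)$ and $\E\|\bz\|_2^2 = O(1)$.

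The crucial structural observation making sampling efficient is that $x$, $y$, and $z$ agree on every coordinate outside $S$, so I can fix permutations $P_x, P_y, P_z$ (consistent with $x,y,z$ respectively) that order the $n-k$ non-$S$ coordinates in a single common way $Q_1,\ldots,Q_{n-k}$ and differ only in where the $k$ elements of $S$ are inserted. During $\Process(x,y,f)$ I use the $g(\cdot)_{P_j} = f(P[j])-f(P[j-1])$ identity to compute $g(x)_i, g(y)_i, g(z)_i$ for every $i \in S$, at a total cost of $O(k)$ oracle calls. I also build prefix sums of these $S$-values along each of $P_x, P_y, P_z$, so that $\sum_{i \in S \text{ inside a given } P_{\cdot}\text{-range}} g(\cdot)_i$ can be answered in $O(1)$ time. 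Finally I query $f([n])$ once so that \cref{lemma:sum} gives $\sum_{i \in [n]} g(\cdot)_i = f([n])$ in constant time afterwards.

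To implement $\Sample(x,y,f)$ for (say) piece $g(z)-g(x)$, I first split the $\ell_1$ mass into an $S$-contribution (an explicit sum over at most $k$ precomputed signed differences) and a non-$S$-contribution. Since the non-$S$ signs are fixed, the non-$S$ mass equals $\sum_{i \notin S}(g(x)_i - g(z)_i)$, and using $\sum_i g(x)_i = \sum_i g(z)_i = f([n])$ this simplifies to $\sum_{i \in S}(g(z)_i - g(x)_i)$, which is already computed. I flip a biased coin between $S$ and non-$S$ proportional to those masses; the $S$ branch samples directly from the stored values in $O(1)$ time. The non-$S$ branch performs a binary search over $Q_1,\ldots,Q_{n-k}$ for a coordinate with probability proportional to $g(x)_{Q_j} - g(z)_{Q_j}$. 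Each binary-search step needs a partial sum $\sum_{j=a}^{b}(g(x)_{Q_j} - g(z)_{Q_j})$, which I compute as (a $P_x$-range sum via \cref{lemma:sum}, two oracle calls) minus (the $S$-contribution in that $P_x$-range, from the prefix sums, no oracle calls), and similarly for $g(z)$. Thus $O(\log n)$ oracle calls suffice per sample. The chosen coordinate $i$ is emitted as $\mathrm{sign}(\cdot)\cdot \|g(z)-g(x)\|_1 \cdot e_i$, multiplied by $2$ to undo the coin flip between the two pieces.

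The main technical hurdle is the careful setup of the three permutations so that the binary-search range queries really do cost only $O(1)$ oracle calls each: without a common ordering on non-$S$ coordinates, a $Q$-range would fail to correspond to a $P_x$-range, breaking the use of \cref{lemma:sum}, and without the precomputed $S$-prefix-sums each range query would spend $\Omega(k)$ work correcting for the $S$-coordinates inside. The sparsity of $y-x$ is precisely what makes both issues manageable, since it bounds both the amount of preprocessing and the number of exceptional coordinates. Once this is in place, correctness of the sign pattern on $[n]\setminus S$ is immediate from \cref{lemma:change}, the unbiasedness follows by construction, and the $O(1)$ variance bound follows from \cref{lemma:l1} together with the fact that each $1$-sparse piecewise estimator has squared norm equal to the squared $\ell_1$ mass of its piece.
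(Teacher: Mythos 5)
Your proposal is correct and follows essentially the same route as the paper's proof: the same split of $y-x$ into positive and negative parts combined by a fair coin flip with doubling, the same use of \cref{lemma:change} for sign-consistency, \cref{lemma:sum} for $O(1)$-oracle-call range sums inside a binary search, and \cref{lemma:l1} for the $O(1)$ variance bound. The only difference is bookkeeping: the paper partitions $[n]$ into $O(k)$ intervals common to $P_x$ and $P_y$ and samples an interval before binary searching, whereas you explicitly precompute the $k$ exceptional coordinates and correct range sums with prefix sums over them---both are valid realizations of the same data structure.
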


In other words, \cref{lemma:sample} implies for fixed $x,y$, we build a data structure with $O(k)$ oracle calls that supports sampling estimates of $g(y) - g(x)$ using $\O(1)$ oracle calls per sample We give the proof in \cref{sec:proofs} for completeness. Careful application of this lemma and the idea of sampling from gradient differences yields the runtimes in \cite{CLSW17} and, with modification to the datastructure, \cite{HRRS19}.

Where we depart from  \cite{CLSW17} (and improve upon it) is how we use the data structure of \cref{lemma:sample}. 
Recall that at iteration $t$, we want to sample an estimate of $g(x_t)$. Instead of using $g(x_t) - g(x_0)$ as above, we will carefully choose a short sequence, $x_{i_0}, x_{i_1}, \dots, x_{i_m}$, where $i_0 = 0$ and $i_m = t$ for $m = \O(1)$.
Now, we sample an estimate of $g(x_t)$ using the identity $g(x_t) = g(x_0) + \sum_{j=0}^{m-1}\left[g(x_{i_{j+1}})-g(x_{i_j})\right]$.
Specifically, we sample an estimate of $g(x_0)$ and each of the remaining terms $g(x_{i_{j+1}})-g(x_{i_j})$ using \cref{lemma:sample}, and sum the estimates.
Note that the sum is $\O(1)$-sparse and has $\O(1)$-variance by \cref{lemma:sample}. Now, we can sample a $1$-sparse estimate of this sum with $\O(1)$ variance.
The key to our algorithm is then choosing the sequence so that the amortized cost of preprocessing all segments $(x_{i_j}, x_{i_{j+1}})$ is $\O(1)$ per iteration.

It suffices to choose the segments using the binary representation of the step counter $t$. 
For example, for $11$ ($1011$ in binary) we would choose $0,8,10,11$ ($0, 1000,1010,1011$ in binary). See \cref{fig:tree} for the corresponding segments.

\begin{figure}
\caption{Intervals processed in line \ref{line:process} and a decomposition of $[0, 11]$}
\begin{center}
\begin{tikzpicture}
  \draw[blue, line width = 0.5mm] (0, 30)--(11, 30);
  \draw[blue, line width = 0.5mm] (0, 29.8)--(0, 30.2);
  \draw[blue, line width = 0.5mm] (11, 29.8)--(11, 30.2);
  \draw node at (0, 30)[above=0.2]{$0$};
  \draw node at (11, 30)[above=0.2]{$11$};
  \draw[red, line width = 0.5mm] (0, 29.5)--(8, 29.5);
  \draw[red, line width = 0.5mm] (0, 29.3)--(0, 29.7);
  \draw[red, line width = 0.5mm] (8, 29.3)--(8, 29.7);
  \draw (0, 29)--(4, 29);
  \draw (8, 29)--(12, 29);
  \draw (0, 28.8)--(0, 29.2);
  \draw (4, 28.8)--(4, 29.2);
  \draw (8, 28.8)--(8, 29.2);
  \draw (12, 28.8)--(12, 29.2);
	\draw (0, 28.5)--(2, 28.5);
	\draw (4, 28.5)--(6, 28.5);
	\draw[red, line width = 0.5mm] (8, 28.5)--(10, 28.5);
	\draw (12, 28.5)--(14, 28.5);
	\draw (0, 28.3)--(0, 28.7);
	\draw (2, 28.3)--(2, 28.7);
	\draw (4, 28.3)--(4, 28.7);
	\draw (6, 28.3)--(6, 28.7);
	\draw[red, line width = 0.5mm] (8, 28.3)--(8, 28.7);
	\draw[red, line width = 0.5mm] (10, 28.3)--(10, 28.7);
	\draw (12, 28.3)--(12, 28.7);
	\draw (14, 28.3)--(14, 28.7);
	\draw (0, 28)--(1, 28);
	\draw (2, 28)--(3, 28);
	\draw (4, 28)--(5, 28);
	\draw (6, 28)--(7, 28);
	\draw (8, 28)--(9, 28);
	\draw[red, line width = 0.5mm] (10, 28)--(11, 28);
	\draw (12, 28)--(13, 28);
	\draw (14, 28)--(15, 28);
	\draw (0, 27.8)--(0, 28.2);
	\draw (1, 27.8)--(1, 28.2);
	\draw (2, 27.8)--(2, 28.2);
	\draw (3, 27.8)--(3, 28.2);
	\draw (4, 27.8)--(4, 28.2);;
	\draw (5, 27.8)--(5, 28.2);
	\draw (6, 27.8)--(6, 28.2);
	\draw (7, 27.8)--(7, 28.2);
	\draw (8, 27.8)--(8, 28.2);
	\draw (9, 27.8)--(9, 28.2);
	\draw[red, line width = 0.5mm] (10, 27.8)--(10, 28.2);
	\draw[red, line width = 0.5mm] (11, 27.8)--(11, 28.2);
	\draw (12, 27.8)--(12, 28.2);
	\draw (13, 27.8)--(13, 28.2);
	\draw (14, 27.8)--(14, 28.2);
	\draw (15, 27.8)--(15, 28.2);
	\draw node at (0, 28)[below=0.2]{$0$};
	\draw node at (1, 28)[below=0.2]{$1$};
	\draw node at (2, 28)[below=0.2]{$2$};
	\draw node at (3, 28)[below=0.2]{$3$};
	\draw node at (4, 28)[below=0.2]{$4$};
	\draw node at (5, 28)[below=0.2]{$5$};
	\draw node at (6, 28)[below=0.2]{$6$};
	\draw node at (7, 28)[below=0.2]{$7$};
	\draw node at (8, 28)[below=0.2]{$8$};
	\draw node at (9, 28)[below=0.2]{$9$};
	\draw node at (10, 28)[below=0.2]{$10$};
	\draw node at (11, 28)[below=0.2]{$11$};
	\draw node at (12, 28)[below=0.2]{$12$};
	\draw node at (13, 28)[below=0.2]{$13$};
	\draw node at (14, 28)[below=0.2]{$14$};
	\draw node at (15, 28)[below=0.2]{$15$};
\end{tikzpicture}
\end{center}
\label{fig:tree}
\end{figure}
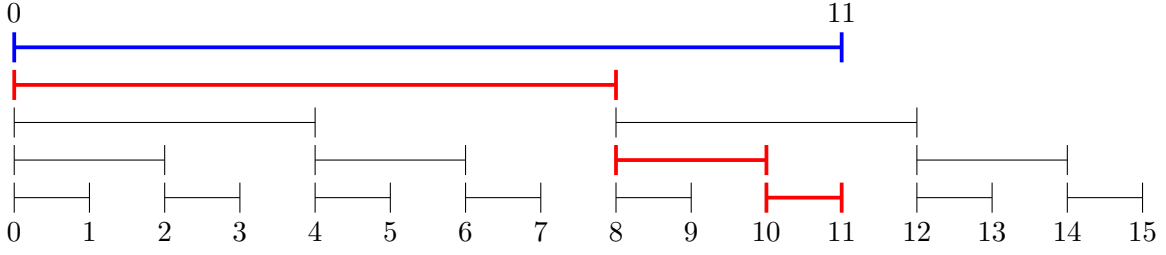

At this point, we are ready to state our algorithm.
\begin{algorithm}[h!]
\caption{$\SFM(f, \eps)$. Takes a submodular function $f: \{0,1\}^n \to [-1, 1]$ and returns a random point $x \in [0, 1]^n$ with $\E[\hf(x)] \le \min_T f(T) + \eps$.}
\begin{algorithmic}[1]
\State $T \assign \O(n/\eps^2).$ \label{line:init1}
\State $x_0 \assign 0 \in \R^n.$ \label{line:init2}
\State $g^0 \assign g(x_0).$ \label{line:init3}
\For{$i=1$ to $T$}\label{line:forloop}
	\State $b \assign $ the number of $1$ bits in the binary representation of $i-1$. \label{line:startbinary}
	\State $k_0 \assign i-1$, $k_{j+1} \assign k_j - 2^{\nu_2(k_j)}$ for $0 \le j \le b-1$. \Comment{$\nu_2(y)$ is the maximum integer $t$ such that $2^t$ divides $y$ } \label{line:makesegments} 
	\State $\bd \assign \|g^0\|_1 \cdot \mathrm{sign}(g^0_k)e_k$ with probability $|g^0_k|/\|g^0\|.$ \Comment{Estimate of $g^0$} \label{line:getd} 
	\State $\bg^{\star i} \assign \bd + \sum_{j=0}^{b-1} \Sample(x_{k_{j+1}}, x_{k_j}, f).$ \label{line:getgstar}
	\State Let $c_1, c_2, \cdots, c_s$ be the nonzero coordinates of $\bg^{\star i}$. \Comment{$s \le b+1$} \label{line:sparse1}
	\State $\bg^i \assign s \cdot \bg^{ \star i}_{c_k} \cdot e_{c_k}$ with probability $\frac{1}{s}.$ \Comment{$\bg^i$ is $1$-sparse} \label{line:sparse2}
	\State $x_i \assign \proj(x_{i-1} - \eta \bg^i, [0, 1]^n).$ \label{line:endbinary}
	\State $\Process(x_{i-2^{\nu_2(i)}}, x_i, f).$ \label{line:process}
\EndFor
\State \Return $\frac{1}{T+1} \sum_{i=0}^T x_i.$
\label{algo:sfm}
\end{algorithmic}
\end{algorithm}

\paragraph{Description of \cref{algo:sfm}.} Lines \ref{line:init1}, \ref{line:init2}, \ref{line:init3} initialize the starting point, number of iterations, and gradient $g^0$ at the the initial point. Line \ref{line:forloop} corresponds to the projected stochastic gradient descent loop.
Lines \ref{line:startbinary}, \ref{line:makesegments} compute the segments using the binary representation of the iteration counter $i$.
Lines \ref{line:getd}, \ref{line:getgstar} use the precomputed data structures to sample a stochastic subgradient in $\O(1)$ time.
Lines \ref{line:sparse1}, \ref{line:sparse2} turn this stochastic subgradient into a $1-$sparse stochastic subgradient which is used in the descent step in line \ref{line:endbinary}.
Line \ref{line:process} updates the data structures for segments that will be used in future iterations. We prove later that this has an amortized $\O(1)$ cost.

\paragraph{Analysis of \cref{algo:sfm}.} In this section we show the following theorem.
\begin{theorem}
\label{thm:main2}
For a submodular function $f: \{0, 1\}^n \to [-1, 1]$, algorithm $\SFM(f, \eps)$ returns a random point $x \in [0, 1]^n$ with $\E[\hf(x)] \le \min_T f(T) + \eps$ and makes $\O(n/\eps^2)$ oracle calls to $f$.
\end{theorem}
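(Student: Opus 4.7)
The plan is to verify the three hypotheses of Theorem~\ref{thm:grad} for the stochastic gradient oracle implicitly built by \cref{algo:sfm}, bound the total oracle complexity including the amortized preprocessing cost, and then appeal to Theorem~\ref{thm:grad} to conclude. The three hypotheses are: (i) $\E[\bg^i \mid x_{i-1}]$ is a valid subgradient of $\hat f$ at $x_{i-1}$; (ii) $\E[\|\bg^i\|_2^2] = \tilde O(1)$; (iii) $R^2 = \max_{x \in [0,1]^n} \tfrac12 \|x\|_2^2 \le n/2$. The last is immediate, so the work is in (i), (ii), and the oracle count.

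For unbiasedness (i), I would first check that the sequence $k_0, k_1, \dots, k_b$ produced on line~\ref{line:makesegments} strips off one bit of $i-1$ per step, so $k_b = 0$ and $k_0 = i-1$. This gives the telescoping identity $g(x_{i-1}) = g(x_0) + \sum_{j=0}^{b-1}\bigl(g(x_{k_j}) - g(x_{k_{j+1}})\bigr)$. By construction $\E[\bd] = g^0 = g(x_0)$, and by \cref{lemma:sample} each $\Sample(x_{k_{j+1}}, x_{k_j}, f)$ returns an unbiased estimate of $g(x_{k_j}) - g(x_{k_{j+1}})$; summing gives $\E[\bg^{\star i}] = g(x_{i-1})$. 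The uniform-coordinate resampling on lines~\ref{line:sparse1}--\ref{line:sparse2} preserves the expectation, so $\E[\bg^i \mid x_{i-1}] = g(x_{i-1}) \in \partial \hat f(x_{i-1})$.

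For the variance bound (ii), I would combine three observations. First, $\|\bd\|_2^2 \le \|g^0\|_1^2 \le 9$ by \cref{lemma:l1}. Second, each $\Sample$ call returns a $1$-sparse random vector with $\E[\|\cdot\|_2^2] = O(1)$ by \cref{lemma:sample}. Third, the number of summands is $b + 1 \le \log_2 T + 1 = \tilde O(1)$. Applying $\|\sum_{j=0}^{b} v_j\|_2^2 \le (b+1)\sum_{j=0}^{b} \|v_j\|_2^2$ gives $\E[\|\bg^{\star i}\|_2^2] = \tilde O(1)$, and the final coordinate-resampling step at most multiplies the second moment by $s \le b+1 = \tilde O(1)$, so $\E[\|\bg^i\|_2^2] = \tilde O(1)$. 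The main delicate point here is that the estimate $\bg^{\star i}$ is itself $\tilde O(1)$-sparse (since $\bd$ and each $\Sample$ output are $1$-sparse), which legitimizes the $s$-from-$s^2$ blow-up computation on line~\ref{line:sparse2}.

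For the oracle count, iterations themselves cost $\tilde O(1)$ each since they make $b+1 = \tilde O(1)$ calls to $\Sample$, each costing $\tilde O(1)$ by \cref{lemma:sample}. The main obstacle, and the reason the binary-tree segment choice is made, is bounding the cumulative cost of the $\Process$ calls on line~\ref{line:process}. Since each descent step changes at most one coordinate (line~\ref{line:sparse2} produces a $1$-sparse $\bg^i$), the difference $x_i - x_{i - 2^{\nu_2(i)}}$ is at most $2^{\nu_2(i)}$-sparse, so \cref{lemma:sample} says the $\Process$ call at step $i$ costs $\tilde O(2^{\nu_2(i)})$ oracle queries. Now I would use the standard identity
\begin{equation*}
\sum_{i=1}^{T} 2^{\nu_2(i)} \;=\; \sum_{j \ge 0} 2^j \cdot \bigl|\{\,i \in [T] : \nu_2(i) = j\,\}\bigr| \;\le\; \sum_{j=0}^{\lfloor \log_2 T \rfloor} 2^j \cdot \frac{T}{2^{j+1}} \;=\; \tilde O(T),
\end{equation*}
so total preprocessing is $\tilde O(T)$. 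Combined with the per-iteration cost, the algorithm uses $\tilde O(T) = \tilde O(n/\eps^2)$ oracle calls. Finally, plugging $R^2 = n/2$ and $B^2 = \tilde O(1)$ into \cref{thm:grad} with $T = \tilde O(n/\eps^2)$ yields $\E[\hat f(\bar x)] \le \min_{x \in [0,1]^n} \hat f(x) + \eps = \min_{T \subseteq [n]} f(T) + \eps$ by \cref{thm:lovasz}, completing the proof of \cref{thm:main2}.
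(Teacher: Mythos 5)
Your proposal is correct and follows essentially the same route as the paper's proof: Cauchy--Schwarz over the $b+1 = \O(1)$ one-sparse summands to bound $\E[\|\bg^{\star i}\|_2^2]$, the factor-$s$ blow-up from the final resampling, the $2^{\nu_2(i)}$-sparsity of $x_i - x_{i-2^{\nu_2(i)}}$ to charge each $\Process$ call, and the dyadic sum $\sum_{i\le T} 2^{\nu_2(i)} = \O(T)$, all fed into \cref{thm:grad} with $R^2 = O(n)$ and $B^2 = \O(1)$. In fact you spell out two points the paper leaves implicit --- the telescoping/unbiasedness of $\bg^{\star i}$ via $k_b = 0$, and the explicit evaluation of the dyadic sum --- so no gaps.
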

\cref{thm:main2} approximately minimizes the Lovasz extension. We then use \cref{lemma:discretize} to find a discrete solution, directly implying \cref{thm:main}.

\begin{proof}[Proof of \cref{thm:main2}]
We first argue that the returned (random) point $x = \frac{1}{T+1}\sum_{i=0}^T x_i$ satisfies $\E[\hf(x)] \le \min_T f(T) + \eps$.

It suffices to show that each stochastic subgradient $\bg^{\star i}$ satisfies $\E[\| \bg^{\star i} \|_2^2] = \O(1).$ 
Then \[ 
\E[\| \bg^i \|_2^2] = s \cdot \E[\| \bg^{ \star i} \|_2^2] = \O(1) 
\] as $\bg^{\star i}$ is $s$-sparse and $s = O(\log T) = \O(1).$
Then setting $T = \O(n/\eps^2)$ suffices to apply \cref{thm:grad} with $R^2 = n$ and $B^2 = \O(1)$.

To bound of $\E \left [\| \bg^{\star i} \|_2^2 \right ]$, first define $\bz_j = \Sample(x_{k_{j+1}}, x_{k_j}, f)$ for $0 \le j \le b-1$ where $b = O(\log T)$, and $\bd$ to be the estimate of $g^0$ as defined in line \ref{line:getd}. Now, note that by Cauchy-Schwarz that
\[ 
\E \left [\|\bg^{\star i}\|_2^2 \right ] = \E\left[\Big\|\bd + \sum_{j=0}^{b-1} \bz_j\Big\|_2^2\right] \le (b+1) \cdot \E\left[\|\bd\|_2^2 + \sum_{j=0}^{b-1} \| \bz_j \|_2^2 \right] = \O(1) 
\] 
by the fact that $\E[\| \bz_j \|_2^2] = O(1)$ by \cref{lemma:sample} and $\E[\|\bd\|_2^2] \le \|g^0\|_1^2 = O(1)$ by \cref{lemma:l1}.

Now, we argue that running algorithm $\SFM(f, \eps)$ takes $\O(n/\eps^2)$ oracle calls. First, we need $O(n)$ initial oracle calls to get the gradient $g^0$ of $x_0$. Now, we have to bound the total number of oracle calls from the calls to \Sample~and \Process. 
To bound the former, we use that \Sample~is called $\O(1)$ times per iteration and only requires $\O(1)$ oracles calls (\cref{lemma:sample}). Note that $\Process(x_{k_{j+1}}, x_{k_j}, f)$ has already been called before we call $\Sample(x_{k_{j+1}}, x_{k_j}, f)$ as $k_{j+1} = k_j - 2^{\nu_2(k_j)}$ (see lines \ref{line:makesegments} and \ref{line:process}).
Therefore, the total cost of all the calls to \Sample~is $\O(T) = \O(n/\eps^2)$ as desired. 

Now we bound the total cost of oracle calls to \Process. Note that each $\bg^i$ is $1$-sparse, hence $x_i - x_{i-1}$ is 1-sparse. Therefore, for any $0 \le j \le i$, we have that $x_i - x_{i-j}$ is $j$-sparse. Thus, line \ref{line:process} takes $O(2^{\nu_2(i)})$ oracle calls to $f$ by \cref{lemma:sample}. The total number of oracle calls is thus \[ O\left( \sum_{i=1}^T 2^{\nu_2(i)} \right) = \O(T) = \O(n/\eps^2) \] as desired.
\end{proof}

\section{SFM for Functions with Sparse Minimizer}
\label{sec:sparse}

In this section, we extend our above algorithm to the case where the minimizer of $f$ is $s$-sparse and $f: \{0, 1\}^n \to [-M, M]$ is an integer valued submodular function. 
In this setting we are able eliminate the linear oracle call dependence on $n$, the dimension of the space. 
A na\"ive application of the previous algorithm runs into the following barriers: computing the initial gradient requires $O(n)$ queries and the stochastic projected gradient descent requires $O(n/\eps^2)$ iterations to converge. Previous work \cite{CLSW17} resolves the latter issue by restricting the domain to $S^s \defeq \{ x \in [0, 1]^n : \sum_i x_i \le s \}$ and arguing that the same algorithmic framework as described in \cref{sec:sfm2} extends to this setting. Here, we focus on the problem of efficiently sampling a gradient of the initial point.

There are two barriers to removing the $n$ dependence above: computing the starting subgradient and reducing the number of required iterations. To efficiently compute the starting gradient, we carefully choose a subgradient that is easier to compute.
To do so, we note that at $x_0 = 0$, every permutation corresponds to a valid subgradient. Thus, it suffices to find any permutation $P$ such that we can sample the subgradient $g^P$ with $\O(1)$ oracle calls per sample after $\O(M^2)$ preprocessing.\footnote{This can be improved to $\O(M)$ oracle calls of preprocessing. We provide a sketch in \cref{rem:m}.}

\paragraph{Efficiently sampling the initial subgradient.}
Recall that for any permutation $P$, $g^P$ is a subgradient at $0$. In this section, we give a randomized algorithm which carefully chooses a permutation $P$ and computes all nonzero coordinates of $g^P$ in $\O(M^2)$ oracle calls.

This allows us to sample future estimates of $g^P$ with variance $\O(M^2)$. In \cref{sec:derandomize} we show that we can actually \emph{derandomize} this part of the algorithm, deterministically finding a permutation $P$ which we can compute all nonzero coordinates of $g^P$ in $\O(\poly(M))$ oracle calls.

Now, we give a high level description for the algorithm. Consider any initial permutation $P_0$.
Sample a subset $S \subseteq [n]$, where each element of $[n]$ is in $S$ with probability $\frac{1}{10M}.$ 
Also, let $j$ be a coordinate such that say $g^{P_0}_j > 0$ (the $g^{P_0}_j < 0$ case is similar). 
Note that since $g^{P_0}$ has integral entries, it must be $3M$-sparse. Therefore, there is at least a $\frac{1}{10M} \left(1-\frac{1}{10M}\right)^{3M}  \ge \frac{1}{20M}$ probability that $j \in S$ and for all other coordinates $j' \in S$, we have $g^{P_0}_{j'} = 0.$ 
Condition on this event. 
Now, label the coordinates in $S$ as $j_1, j_2, \cdots, j_{|S|}$, ordered as they were originally in $P_0.$ Label the coordinates not in $S$ as $i_1, i_2, \cdots, i_{n-|S|}$, also ordered as they were originally in $P_0.$

Consider the permutation $P' = \{j_1, j_2, \cdots, j_{|S|}, i_1, i_2, \cdots, i_{n-|S|}\}$. 
Note that because $g^{P_0}_t \ge 0$ for all $t \in S$ and $g^{P_0}_j > 0$, we have that, by submodularity, $g^{P'}_{P'_t} \ge 0$ for $1 \le t \le |S|$, and that there is a $1 \le t \le |S|$ with $g^{P'}_{P'_t} > 0.$ 
Now, we can find such a coordinate in $\O(1)$ oracle calls with a binary search using \cref{lemma:sum}.
Once we find a coordinate $t$ with $g^{P'}_t > 0$, we can move that coordinate to the left of the permutation, and continue the same algorithm on the remaining elements. We can find negative elements in a similar way by moving them to the right of the permutation. Note that submodularity ensures that moving the coordinate to the left or right of the permutation never makes it zero. Repeating this process $\O(M^2)$ times gives us our permutation $P$.

After defining some additional notation we will be ready to state the algorithm.
\begin{itemize}
\item For sequences of integers $A, B$ we use $A \oplus B$ to denote concatenating $A$ and $B$. This is useful to allow us to express concatenating subsequences of permutations. For example $\{1, 3\} \oplus \{2, 4\} = \{1, 3, 2, 4\}.$
\item For a sequence $P$ and a subsequence $P'$ of $P$, the notation $P \bs P'$ means to delete the elements from $P'$ from $P$, while keeping the remaining elements in the same order as originally in $P$.
\item \textbf{Sequences $P_l$ and $P_r$.} After finding coordinates $j$ where $g^P_j$ is positive or negative, we move them to the left or right of the permutation respectively. We denote these ``fixed coordinates" as $P_l$ and $P_r$.
\item \textbf{Subsequence $S$.} This is the subset of coordinates of $P$ that we sample in an attempt to find a positive or negative coordinate.
\end{itemize}
\begin{algorithm}[h!]
\caption{$\Findperm(f)$. Takes a integer-valued submodular function $f$. Returns a pair $(P', g')$ of a permutation $P$ of $[n]$ and the associated subgradient $g' = g^{P'}$, encoded by all its $O(M)$ nonzero coordinates.}
\begin{algorithmic}[1]
\State $P \assign \{1, 2, \cdots, n \}.$ \Comment{Arbitrary initialization.}
\State $P_l, P_r \assign \emptyset$.
\For{$t=1$ to $\O(M^2)$} \label{line:start}
	\State $S$ is a random subset of $P$, where each element $j \in P$ is in $S$ with probability $\frac{1}{10M}.$ \label{line:gens}
	\State $Q \assign P_l \oplus S \oplus (P\bs S) \oplus P_r$ \Comment{Elements $S$ and $P\bs S$ are ordered as in $P$}
	\If{$\sum_{j \in S} g^Q_j > 0$} \Comment{Check for positive elements, using \cref{lemma:sum}} \label{line:check1}
		\State $x \assign \FindIndex(f, Q, S, 1).$
		\State $P_l \assign P_l \oplus \{x\}.$
		\State $P \assign P \bs \{x\}.$
		\State Go back to line \ref{line:start}.
	\EndIf
	\State $Q \assign P_l \oplus (P\bs S) \oplus S \oplus P_r$. \Comment{Elements $S$ and $P\bs S$ are ordered as in $P$}
	\If{$\sum_{j \in S} g^Q_j < 0$} \Comment{Check for negative elements, using \cref{lemma:sum}} \label{line:check2}
		\State $x \assign \FindIndex(f, Q, S, -1).$
		\State $P_r \assign \{x\} \oplus P_r.$
		\State $P \assign P \bs \{x\}.$
		\State Go back to line \ref{line:start}.
	\EndIf
\EndFor
\State \Return Permutation $P' = P_l \oplus P \oplus P_r$, with $g^{P'}$ encoded by the nonzero coordinates in $P_l$ and $P_r$.
\end{algorithmic}
\label{algo:findperm}
\end{algorithm}
\begin{algorithm}[h!]
\caption{$\FindIndex(f, P, S, b)$. Takes a integer-valued submodular function $f$, permutation $P$ of $[n]$, contiguous subset $S$ of $P$, and integer $b$ which is $\pm 1.$ Returns an index $j \in S$ such that $\sign(g^P_j) = b.$}
\begin{algorithmic}[1]
\State If $S = \{x\}$ (i.e. $S$ contains a single element), \Return $x$.
\State Split $S$ in half into subintervals $S'$ and $S''.$
\If{$\sign(\sum_{j \in S'}g^P_j) = b$} \Comment{Uses $O(1)$ oracle calls by \cref{lemma:sum}}
	\State \Return $\FindIndex(f, P, S', b)$.
\Else
	\State \Return $\FindIndex(f, P, S'', b)$.
\EndIf
\end{algorithmic}
\label{algo:findindex}
\end{algorithm}
\begin{lemma}
\label{lemma:findperm}
With high probability in $n$, \Findperm$(f)$ computes a permutation $P'$ and all the nonzero coordinates of the associated gradient $g^{P'}.$ It uses $\O(M^2)$ oracle calls to $f$.
\end{lemma}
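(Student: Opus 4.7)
The plan has three ingredients: (i) the signs of coordinates placed in $P_l$ and $P_r$ are preserved throughout subsequent iterations (so each promotion is permanent); (ii) whenever $P$ still contains a coordinate $j$ with $g^{P'}_j \neq 0$ for the current permutation $P' = P_l \oplus P \oplus P_r$, the next iteration succeeds with probability $\Omega(1/M)$; and (iii) each iteration uses only $O(\log n)$ oracle calls. Combined with the bound $\|g^{P'}\|_0 \le \|g^{P'}\|_1 \le 3M$ (which follows from \cref{lemma:l1} and integrality of $f$) and the outer-loop length $\O(M^2)$, a standard Chernoff / negative-binomial tail bound then yields the lemma.

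For (i), when the algorithm promotes $x \in S$ to $P_l$ on account of $g^Q_x > 0$ in $Q = P_l \oplus S \oplus (P \bs S) \oplus P_r$, the prefix preceding $x$ in the new permutation $P'_{\mathrm{new}} = P_l \oplus \{x\} \oplus (P \bs \{x\}) \oplus P_r$ is exactly $P_l$, which is contained in the prefix preceding $x$ in $Q$ (namely $P_l$ together with the elements of $S$ ordered before $x$). Submodularity, in the form $f(A \cup \{i\}) - f(A) \ge f(B \cup \{i\}) - f(B)$ for $A \subseteq B$, gives $g^{P'_{\mathrm{new}}}_x \ge g^Q_x > 0$, and integrality strengthens this to $\ge 1$. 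Coordinates already in $P_l$ keep the same preceding prefix, so their gradient values are unchanged; the argument for $P_r$ is symmetric.

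For (ii), suppose $j^\star \in P$ satisfies $g^{P'}_{j^\star} > 0$ and let $q$ be the number of negative-gradient coordinates of $g^{P'}$ in $P$, so $q \le 3M$. With probability at least $\tfrac{1}{10M}\bigl(1-\tfrac{1}{10M}\bigr)^{q} = \Omega(1/M)$ the sampled $S$ contains $j^\star$ and no negative-gradient coordinate of $P$. On this event, the same prefix-inclusion argument gives $g^Q_j \ge g^{P'}_j \ge 0$ for every $j \in S$, with strict inequality at $j^\star$, so $\sum_{j \in S} g^Q_j > 0$ and the positive branch fires. Since $S$ (and each subinterval used in the binary search) is a contiguous block of $Q$, \cref{lemma:sum} evaluates each partial-sum test in $O(1)$ oracle calls, and a positive partial sum necessarily contains some coordinate with positive $g^Q$, so \FindIndex correctly locates an $x \in S$ with $g^Q_x > 0$ in $O(\log n)$ oracle calls. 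The case where $P$ contains only negative-gradient coordinates is symmetric.

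Fact (iii) is then immediate from the cost of the two threshold tests plus \FindIndex. For termination, the potential $|P_l| + |P_r|$ is non-decreasing and, by (i), bounded above by $\|g^{P'}\|_0 \le 3M$, so at most $3M$ promotions ever occur; and since only a promotion can change the state, the run splits into $O(M)$ ``active'' phases during which $\psi_t \defeq \|g^{P'_t}\|_0 - |P_l^t| - |P_r^t| \ge 1$, each ending at the next promotion. By (ii) each such phase has geometric length with parameter $\Omega(1/M)$, so a union of standard tail bounds shows the total number of active iterations is $\O(M^2)$ with probability $1-1/\poly(n)$; setting the outer-loop bound $\O(M^2)$ with a large enough constant guarantees that at termination $\psi_T = 0$, i.e., every $j \in P$ satisfies $g^{P'}_j = 0$. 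The main subtlety to verify is the bookkeeping in (i) --- getting the set-inclusion direction between prefixes in $Q$ and in the updated permutation exactly right --- because every sign-preservation step rests on invoking submodularity with the correct orientation.
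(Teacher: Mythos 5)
Your proof is correct and follows essentially the same route as the paper's: isolate a nonzero coordinate of $g^{P'}$ in a random $\frac{1}{10M}$-sample with probability $\Omega(1/M)$, locate it by binary search via \cref{lemma:sum}, and bound the number of promotions by $\|g^{P'}\|_0 \le 3M$ to get $\O(M^2)$ iterations of $\O(1)$ oracle calls each. Your argument is in fact slightly more careful than the paper's in two places it treats as immediate --- the explicit prefix-inclusion verification that promoted coordinates keep nonzero sign, and the observation that it suffices for $S$ to avoid only the negative-gradient coordinates rather than all other nonzero ones --- but these are refinements of, not departures from, the same proof.
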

\begin{proof}
Consider some point during the execution of $\Findperm(f)$, and the sequences $P_l, P_r, P$ at that time. Define $P' = P_l \oplus P \oplus P_r$. Our main claim is that if $g^{P'}_j \neq 0$ for some $j \in P$, then within $\O(M)$ iterations of the loop starting at line \ref{line:start}, one of line \ref{line:check1} or \ref{line:check2} will be true. To show this, let $j \in P$ be such that $g^{P'}_j \neq 0$, and without loss of generality, say $g^{P'}_j > 0.$ Because $g^{P'}$ has at most $3M$ nonzero coordinates, with probability at least \[ \left(1-\frac{1}{10M}\right)^{3M} \cdot \frac{1}{10M} \ge \frac{1}{20M} \] it will be true that $j \in S$ and for all $t \in S$ with $t \neq j$, that $g^{P'}_t = 0.$ Then by submodularity, it is clear that if we define
$Q = P_l \oplus S \oplus (P\bs S) \oplus P_r$ that $\sum_{j \in S}g^Q_j > 0$. As this happens with probability at least $\frac{1}{20M}$, it will happen w.h.p. within $\O(M)$ iterations.

Now, we must argue that $\FindIndex(f, P, S, b)$ indeed computes an index $j \in S$ such that $\sign(g^P_j) = b.$ We do the case $b = 1$ as the other is analogous. This amounts to checking that if $\sign(\sum_{j \in S} g^P_j) = b$ and $S'$ and $S''$ are subintervals of $S$ whose union is $S$, then either $\sign(\sum_{j \in S'} g^P_j) = b$ or $\sign(\sum_{j \in S''} g^P_j) = b$ but this is trivial as
\[ \sum_{j \in S} g^P_j = \sum_{j \in S'} g^P_j + \sum_{j \in S''} g^P_j. \]

To finish the proof, note that line \ref{line:check1} and \ref{line:check2} can only be true $O(M)$ times, as for any permutation $P'$ we know that $g^{P'}$ has only $3M$ nonzero coordinates. Therefore, iterating $t = \O(M^2)$ is sufficient by our main claim shown in the first paragraph. As each iteration takes $\O(1)$ function calls in $\FindIndex(f, P, S, b)$ by \cref{lemma:sum}, the total number of function calls is also $\O(M^2)$ as desired.
\end{proof}

\begin{rem}
\label{rem:m}
Here we sketch how to change \cref{algo:findperm} to improve the number of oracle calls in \cref{lemma:findperm} to $\O(M)$. This doesn't affect our main result \cref{thm:sparse} because the number of oracle calls needed to perform the projected gradient descent dominates.

If $g^P$ has exactly $t$ nonzero coordinates, then we can show that choosing $S$ as a random subset of $P$, where each element $j \in P$ is in $S$ with probability $p$ for $\frac{1}{20t} \le p \le \frac{1}{10t}$ (analogous to line \ref{line:gens} of \cref{algo:findperm}) will isolate some nonzero coordinate of $g^P$ with at least constant probability. This is because each nonzero coordinate of $g^P$ has at least a $p \cdot (1-p)^t \ge \frac{1}{100t}$ chance of being isolated. Unioning over all $t$ nonzero coordinates (which correspond to disjoint events) shows that there is at least a $\frac{1}{100}$ probability of some nonzero coordinate being isolated. Therefore, running this $\O(1)$ times isolates some coordinate w.h.p.

As we do not know $t$, we iterate over guesses for $t$, i.e. set $p = 2^{-i}$ for for $0 \le i \le O(\log M)$ and run the process described in the above paragraph for each value of $p$.
\end{rem}

\paragraph{Projecting onto $S^s$.} The $\ell_2$ projection onto $S^s$ can be computed as follows. This was stated in \cite{CLSW17}.
\begin{lemma}
\label{lemma:projss}
For $s \ge 0$ let $S^s = \{x \in [0, 1]^n : \sum_i x_i \le s \}.$ For any $y \in \R^n$, we have that the point $z = \proj(y, S^s)$ is given by
$z_i = \text{median}(0, 1, y_i-\lambda)$, where $\lambda$ is the smallest nonnegative real number such that $\sum_i z_i \le s.$
\end{lemma}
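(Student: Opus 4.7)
The plan is to verify the formula via the KKT conditions for the strictly convex quadratic program defining the projection. Since $S^s$ is a nonempty (it contains $0$), closed, convex set, the projection $z = \proj(y, S^s)$ exists and is unique; it is the minimizer of $\tfrac{1}{2}\|z-y\|_2^2$ over $z \in \R^n$ subject to the box constraints $0 \le z_i \le 1$ and the knapsack constraint $\sum_i z_i \le s$. I would introduce multipliers $\lambda \ge 0$ for the sum constraint and $\alpha_i, \beta_i \ge 0$ for the lower- and upper-box constraints respectively, and form the Lagrangian
$$L(z,\lambda,\alpha,\beta) = \tfrac{1}{2}\|z-y\|_2^2 + \lambda\left(\textstyle\sum_i z_i - s\right) - \sum_i \alpha_i z_i + \sum_i \beta_i(z_i - 1).$$
Stationarity $\partial L/\partial z_i = 0$ yields $z_i = y_i - \lambda + \alpha_i - \beta_i$.

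Next I would run a three-case analysis driven by complementary slackness (noting that $\alpha_i$ and $\beta_i$ cannot both be positive since $0 < 1$). If $y_i - \lambda < 0$, taking $\alpha_i = \lambda - y_i > 0$, $\beta_i = 0$, and $z_i = 0$ is KKT-feasible. If $y_i - \lambda > 1$, taking $\beta_i = y_i - \lambda - 1 > 0$, $\alpha_i = 0$, and $z_i = 1$ is KKT-feasible. Otherwise $0 \le y_i - \lambda \le 1$, and taking $\alpha_i = \beta_i = 0$ with $z_i = y_i - \lambda$ works. In all three cases the stationary value is $z_i = \mathrm{median}(0, 1, y_i - \lambda)$, matching the claimed formula for any fixed choice of $\lambda$.

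Finally, I would justify the stated choice of $\lambda$. Define $\Phi(\lambda) := \sum_i \mathrm{median}(0, 1, y_i - \lambda)$; each summand is continuous and non-increasing in $\lambda$, so $\Phi$ is continuous and non-increasing, with $\Phi(\lambda) \to 0$ as $\lambda \to \infty$. Complementary slackness for the sum constraint forces either $\lambda = 0$ (if the constraint is slack) or $\Phi(\lambda) = s$ (if it is active). If $\Phi(0) \le s$, then $\lambda = 0$ is both feasible and optimal; otherwise monotonicity and continuity of $\Phi$ yield a unique $\lambda^\star > 0$ with $\Phi(\lambda^\star) = s$. In either case $\lambda$ is precisely the smallest nonnegative real with $\Phi(\lambda) \le s$, as asserted. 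The argument is mostly mechanical; the only point requiring care is verifying that the three cases of the median formula align with the three cases of complementary slackness, so this would be the spot I would write out most carefully.
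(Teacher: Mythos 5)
The paper gives no proof of this lemma at all---it simply cites \cite{CLSW17}---so there is nothing to compare against; your KKT argument is the standard route and it is correct and complete. The stationarity condition, the three-case complementary-slackness analysis recovering $z_i = \mathrm{median}(0,1,y_i-\lambda)$, and the selection of $\lambda$ via the non-increasing continuous function $\Phi$ are all right, and strict convexity plus sufficiency of KKT for this convex program (the constraints are affine, so no constraint qualification issues arise) closes the argument. One cosmetic point: your claim that there is a \emph{unique} $\lambda^\star>0$ with $\Phi(\lambda^\star)=s$ can fail, since $\Phi$ is piecewise linear and may be constant on an interval (e.g.\ $n=1$, $y_1=1/2$, $s=0$ gives $\Phi(\lambda)=0$ for all $\lambda\ge 1/2$); but this is harmless because the lemma asks for the \emph{smallest} nonnegative $\lambda$ with $\Phi(\lambda)\le s$, and by continuity that smallest value either is $0$ or satisfies $\Phi(\lambda)=s$ exactly, which is all that complementary slackness requires.
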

Note that \cref{lemma:projss} shows that the permutation $P$ consistent with $y$ is also consistent with $\proj(y, S^s).$

\paragraph{Algorithm description and analysis.} After finding a permutation $P$ where we can efficiently sample $g^P$, we set $x_0 = 0$ (the origin), which is consistent with every permutation, and run a variation of \cref{algo:sfm} where we project onto $S^s.$ We need the following variation on \cref{lemma:sample} to deal with the projections. \cref{lemma:sample2} was also argued in \cite{CLSW17}. A proof sketch is provided in \cref{sec:proofs}.
\begin{restatable}{lemma}{restatesampletwo}
\label{lemma:sample2}
Let $f:\{0, 1\}^n \to [-M, M]$ be a submodular function with Lovasz extension $\hf$. Let $g$ denote the subgradients of $\hf$. Let $x, y \in [0, 1]^n$ be vectors and let $P_x$ and $P_y$ be permutations consistent with $x, y$ respectively. Assume that we can transform $P_x$ into $P_y$ by deleting $k$ elements from $P_x$ and inserting them back in other locations. There is a data structure which after $O(k)$ calls to $f$ of preprocessing supports the following: sample a $1$-sparse random variable $\bz$ with $\E[\bz] = g(y)-g(x)$ and $\E[\|\bz\|_2^2] = O(1)$ in $\O(1)$ calls to $f$. Preprocessing is called through $\Process(x, y, f)$, and the sampling is called through $\Sample(x, y, f).$
\end{restatable}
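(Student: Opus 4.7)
The plan is to follow the same architecture as the proof of \cref{lemma:sample} (which is cited to appear in \cref{sec:proofs}), and to relax the hypothesis from ``$y-x$ is $k$-sparse'' to ``$P_x$ and $P_y$ differ by $k$ deletion/insertion moves.'' The key structural observation is that the $k$ moved elements partition the remaining $n-k$ coordinates into at most $k+1$ maximal \emph{runs}, i.e.\ contiguous blocks of positions in which the unmoved coordinates appear in exactly the same relative order in $P_x$ and in $P_y$. These runs give a common interval decomposition of both permutations that replaces the role played by the unchanged coordinates in the original lemma.

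For the $O(k)$ preprocessing phase, I would iterate over the $k+1$ runs $R$ and, for each of them, invoke \cref{lemma:sum} twice (once with permutation $P_x$, once with $P_y$) to compute $\sum_{P_x(i)\in R} g(x)_{P_x(i)}$ and $\sum_{P_y(i)\in R} g(y)_{P_y(i)}$ in $O(1)$ oracle calls each. I would also explicitly evaluate $g(x)_j$ and $g(y)_j$ for the $k$ moved coordinates $j$, again at $O(1)$ oracle calls apiece via \cref{lemma:sum} applied to length-$1$ intervals. This gives $O(k)$ oracle calls of preprocessing and furnishes every aggregate statistic the sampler needs.

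For the $\O(1)$-per-sample phase, I would write $g(y)-g(x)$ as the sum of (i) a contribution from the $k$ moved coordinates (where both $g(x)_j$ and $g(y)_j$ are now explicitly known from preprocessing) and (ii) a contribution from each of the $k+1$ runs (where only the aggregate sums of $g(x)$ and $g(y)$ over the run are known). To sample, first pick a block — one of the moved coordinates or one of the runs — with probability proportional to its $\ell_1$ contribution to $g(y)-g(x)$ (the two aggregate sums suffice to bound this by $|\text{sum}_y - \text{sum}_x|$). Conditional on picking a run, recursively bisect it and use \cref{lemma:sum} on each half for both $P_x$ and $P_y$, picking the sub-interval with probability proportional to its contribution; this is the same recursion as the all-positive sampler invoked in the original proof of \cref{lemma:sample}, now run on the signed difference within a run. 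The depth is $O(\log n)$, so each sample costs $\O(1)$ oracle calls, and standard importance-weighting yields a $1$-sparse unbiased estimator of $g(y)-g(x)$.

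The main obstacle is the variance bound. Naively, importance sampling against $\ell_1$ mass yields $\E[\|\bz\|_2^2]\le\|g(y)-g(x)\|_1^2 \le (\|g(x)\|_1+\|g(y)\|_1)^2 = O(M^2)$ by \cref{lemma:l1}. The $O(1)$ bound in the statement comes from the normalized $[-1,1]$ regime that the algorithm maintains; in the $[-M,M]$ setting of \cref{sec:sparse} one simply rescales, consistent with the $M^2$ factor that already appears in the iteration count in \cref{thm:sparse}. Verifying that the recursive bisection does not amplify the variance (because at each bisection step we pick the child with probability proportional to its own $\ell_1$ contribution, so the telescoped importance weight equals the top-level $\ell_1$ total) is the one place the original argument must be re-examined, but it is otherwise identical to \cref{lemma:sample}.
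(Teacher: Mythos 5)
There is a genuine gap. Your sampler hinges on the claim that choosing a run proportional to $\lvert \mathrm{sum}_y - \mathrm{sum}_x\rvert$ and then recursively bisecting ``on the signed difference within a run'' yields an unbiased estimator with second moment $\|g(y)-g(x)\|_1^2$. That is only true if, within each run, the entries $g(y)_j - g(x)_j$ all have the same sign: otherwise the aggregate sum over a run can be far smaller than its $\ell_1$ mass (or even zero while the run carries mass), the bisection ``probabilities'' can be negative, and the telescoped importance weight does not collapse to the top-level $\ell_1$ total. You flag this as ``the one place the original argument must be re-examined'' but never resolve it, and the resolution is not a bookkeeping check --- it is the submodularity input. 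The original proof of \cref{lemma:sample} gets sign-consistency by first splitting $d=y-x$ into its nonnegative and nonpositive parts (sampling each half separately) and then invoking \cref{lemma:change} to conclude that for every unmoved coordinate the difference $g(y)_j-g(x)_j$ has a fixed sign. Your run decomposition needs the analogous step: separate the $k$ moves into those that shift elements earlier versus later in the permutation, handle the two groups as separate estimators, and only then is each run sign-consistent and the bisection valid.

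Beyond that gap, your route is also more labor-intensive than the paper's. The paper proves \cref{lemma:sample2} by a one-line reduction: since $g(x)$ depends only on the consistent permutation $P_x$ (and likewise for $y$), one can pick surrogate points $x',y'$ with distinct coordinates whose consistent permutations are $P_x,P_y$ and whose difference $y'-x'$ is $k$-sparse, and then apply \cref{lemma:sample} verbatim to $x',y'$. Your direct re-derivation of the data structure from the permutation moves would be a fine alternative once the sign-consistency step is supplied, but as written the core probabilistic claim does not go through. Your observation about the $O(1)$ versus $O(M^2)$ variance in the $[-M,M]$ range is a fair reading of an imprecision in the statement and is not the issue.
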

In other words, we don't necessarily need for $y-x$ to be $k$-sparse as in \cref{lemma:sample}; it suffices for their consistent permutations to only ``differ" by $k$ moves. This essentially follows from the fact that $g(y)$ only depends on $P_y$.

At this point we are ready to state our algorithm.
\begin{algorithm}[h!]
\caption{$\SparseSFM(f, s, \eps)$. Takes a submodular function $f: \{0,1\}^n \to [-M, M]$ with an $s$-sparse minimzer and returns a random point $x \in [0, 1]^n$ with $\E[\hf(x)] < \min_T f(T) + 1$.}
\begin{algorithmic}[1]
\State $T \assign \O(sM^2).$
\State $x_0 \assign 0 \in \R^n.$
\State $(P_0, g^0) \assign \Findperm(f).$
\For{$i=1$ to $T$}
	\State $b \assign $ the number of $1$ bits in the binary representation of $i-1$. \label{line:startbinary2}
	\State $k_0 \assign i-1$, $k_{j+1} \assign k_j - 2^{\nu_2(k_j)}$ for $0 \le j \le b-1$. \Comment{$\nu_2(y)$ is the maximum integer $t$ such that $2^t$ divides $y$}
	\State $\bd \assign \|g^0\|_1 \cdot \sign(g^0_k)e_k$ with probability $|g^0_k|/\|g^0\|_1.$ \Comment{Estimate of $g^0$} \label{line:getd2}
	\State $\bg^{\star i} \assign \bd + \sum_{j=0}^{b-1} \Sample(x_{k_{j+1}}, x_{k_j}, f).$
	\State Let $c_1, c_2, \cdots, c_w$ be the nonzero coordinates of $\bg^{\star i}$ \Comment{$w \le b+1$}
	\State $\bg^i \assign w \cdot \bg^{\star i}_{c_k} \cdot e_{c_k}$ with probability $\frac{1}{w}.$ \Comment{$\bg^i$ is $1$-sparse}
	\State $x_i \assign \proj(x_{i-1} - \eta \bg^i, S^s).$ \label{line:endbinary2}
	\State $\Process(x_{i-2^{\nu_2(i)}}, x_i, f).$ \label{line:process2}
\EndFor
\State \Return $\frac{1}{T+1} \sum_{i=0}^T x_i.$
\label{algo:sparsesfm}
\end{algorithmic}
\end{algorithm}
\begin{theorem}
\label{thm:sparse2}
For submodular function $f: \{0, 1\}^n \to [-M, M]$, algorithm $\SparseSFM(f, s, \eps)$ returns random point $x \in [0, 1]^n$ with $\E[\hf(x)] < \min_T f(T) + 1$ using $\O(sM^2)$ oracle calls to $f$.
\end{theorem}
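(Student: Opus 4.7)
The plan is to mirror the analysis of \cref{thm:main2}, applying \cref{thm:grad} to the Lovasz extension $\hf$ restricted to the convex set $S^s = \{x \in [0,1]^n : \sum_i x_i \le s\}$. Since $\min_{x \in S^s} \hf(x) = \min_{S \subseteq [n], |S| \le s} f(S) = \min_T f(T)$ by the sparsity assumption on the minimizer and standard properties of $\hf$ (\cref{thm:lovasz}), it suffices to produce an $\eps$-additive approximate minimizer of $\hf$ on $S^s$ for $\eps < 1$, which we will convert to a discrete minimizer via \cref{lemma:discretize}. Because $f$ is integer-valued, any $\eps < 1$ additive error amounts to exact optimization.

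First I would compute the relevant parameters in \cref{thm:grad}. For $R^2$, observe that for all $x \in S^s$ we have $\|x\|_2^2 \le \sum_i x_i \le s$ (since $x_i \in [0,1]$ so $x_i^2 \le x_i$), hence $R^2 = O(s)$. For $B^2$, I would bound $\E[\|\bg^i\|_2^2]$ exactly as in \cref{thm:main2}. Decompose $\bg^{\star i} = \bd + \sum_{j=0}^{b-1} \bz_j$ where $b = O(\log T) = \otilde(1)$, $\bd$ is the estimate of $g^0$ from line \ref{line:getd2}, and $\bz_j = \Sample(x_{k_{j+1}}, x_{k_j}, f)$. By \cref{lemma:l1}, $\|g^0\|_1 \le 3M$ so $\E[\|\bd\|_2^2] \le \|g^0\|_1^2 = O(M^2)$. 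By \cref{lemma:sample2} (applied to a function with range $[-M,M]$, which scales the bound by $M^2$), each $\E[\|\bz_j\|_2^2] = \otilde(M^2)$. Cauchy--Schwarz gives $\E[\|\bg^{\star i}\|_2^2] = \otilde(M^2)$, and since we then resample to a $1$-sparse estimate using at most $s \le b+1 = \otilde(1)$ nonzero coordinates, $\E[\|\bg^i\|_2^2] = \otilde(M^2)$, i.e.\ $B^2 = \otilde(M^2)$. Thus $T = 2R^2 B^2/\eps^2 = \otilde(sM^2)$ iterations with $\eps$ a small constant (say $1/2$) suffice.

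Next I would verify that the sampling step is well-defined. The initial subgradient $g^0$ is obtained from $\Findperm(f)$, which by \cref{lemma:findperm} returns a permutation $P_0$ and all nonzero coordinates of $g^{P_0}$ using $\otilde(M^2)$ oracle calls; since $x_0 = 0$ is consistent with every permutation, $g^0 = g^{P_0}$ is a valid subgradient at $x_0$. To invoke \cref{lemma:sample2} on the pair $(x_{k_{j+1}}, x_{k_j})$, I must check that their consistent permutations differ by at most $2^{\nu_2(k_j)}$ insertion/deletion moves. The key observation is the following: each $\bg^i$ is $1$-sparse, so $x_{i-1} - \eta \bg^i$ differs from $x_{i-1}$ in at most one coordinate, and \cref{lemma:projss} guarantees that the projection onto $S^s$ preserves consistency of permutations. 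Consequently the permutation consistent with $x_i$ differs from that consistent with $x_{i-1}$ by at most one move, and by induction the permutations of $x_a$ and $x_b$ differ by at most $|b-a|$ moves. This is exactly the hypothesis needed to apply \cref{lemma:sample2} with parameter $k = 2^{\nu_2(i)}$ in line \ref{line:process2}.

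Finally, the oracle-call accounting. $\Findperm$ costs $\otilde(M^2)$. Each iteration makes $\otilde(1)$ calls to $\Sample$, each of which costs $\otilde(1)$ oracle calls by \cref{lemma:sample2}, for a total of $\otilde(T) = \otilde(sM^2)$. The preprocessing in line \ref{line:process2} costs $O(2^{\nu_2(i)})$ oracle calls in iteration $i$, and $\sum_{i=1}^T 2^{\nu_2(i)} = \otilde(T)$ by the standard binary decomposition identity already used in \cref{thm:main2}. Summing all contributions gives $\otilde(sM^2)$ oracle calls, as claimed. I anticipate the main subtlety is the second step above: carefully confirming that projections onto $S^s$ preserve the ``permutation-only differs by $k$ swaps'' invariant that \cref{lemma:sample2} requires, since $S^s$ is a more restrictive domain than the box $[0,1]^n$ used in \cref{thm:main2}; however \cref{lemma:projss} together with the $1$-sparsity of each gradient step makes this essentially immediate.
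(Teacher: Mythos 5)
Your proposal is correct and follows essentially the same route as the paper: the paper's proof of \cref{thm:sparse2} is exactly ``copy the proof of \cref{thm:main2}, replacing \cref{lemma:sample} with \cref{lemma:sample2}, with $R^2 = s$, $B^2 = \O(M^2)$, and $\eps = 1/2$.'' Your additional care in checking that projections onto $S^s$ preserve consistent permutations (via \cref{lemma:projss}) so that \cref{lemma:sample2} applies is precisely the point the paper relies on implicitly.
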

\begin{proof}[Proof sketch]
Copy the proof of \cref{thm:main2}, replacing \cref{lemma:sample} with \cref{lemma:sample2}. $T = \O(sM^2)$ suffices as we can set $\eps = \frac{1}{2}$, and $B^2 = \O(M^2)$, and $R^2 = s$ in \cref{thm:grad}.
\end{proof}
It is direct to see that \cref{thm:sparse2} implies \cref{thm:sparse}.

\section{SFM over Domain $[k]^n$}
\label{sec:sfmk}

Previous work \cite{Bach19} has considered more general domains for submodular functions, instead of the standard $\{0,1\}^n$. Such a domain that the definition of submodularity can be extended to is functions $f: [k]^n \to \R$. We call a function $f: [k]^n \to \R$ submodular if for all $x, y \in [k]^n$ we have that
\[ f(x)+f(y) \ge f(\max\{x, y\})+f(\min\{x, y\}) \] where $\max$ and $\min$ are applied entry-wise. We assume without loss of generality that $f((1, \cdots, 1)) = 0.$

This definition can be further extended to functions over continuous domains. This has also been considered in previous work. We call a function $f: [0, 1]^n \to \R$ submodular if for all $i, j \in [n]$ with $i \neq j$ we have that $\frac{\partial^2 f}{\partial x_i \partial x_j} \le 0$, i.e. all mixed partials are non-positive everywhere.

In this section we show how to obtain algorithms for minimizing submodular functions in each setting that make a number of oracle calls nearly linear in $n$.

\subsection{Preliminaries}
We start by providing the necessary definitions for this section.

\paragraph{General notation.} Define $[k] \defeq \{1, 2, \dots, k\}$ and $[k]^n \defeq \{(x_1, x_2, \cdots, x_n) : x_i \in [k] \forall i \}$.

\paragraph{Continuous extension.} Here, we define the continuous extension for submodular functions $f : [k]^n \to \R$. All the results below were proven by Bach \cite{Bach19}. We first define its domain.
\begin{definition}[Domain of continuous extension]
\label{def:domain}
Let $f : [k]^n \to \R$ be a submodular function. Define the set $H_k \defeq \{x \in [0, 1]^{k-1} : x_1 \ge x_2 \ge \cdots \ge x_{k-1} \}.$ The set $H_k^n \defeq \overbrace{H_k \times \cdots \times H_k}^{n \text{ times}}$ will be the domain of the continuous extension of $f$.
\end{definition}
For a point $x = (x^1, x^2, \cdots, x^n) \in H_k^n$, we define $x_{a, b} \defeq x^a_b.$

We define a permutation consistent to a point $x \in H_k^n$. This is a generalization of the situation for submodular functions over $\{0, 1\}^n$.
\begin{definition}[Associated permutation to a submodular function]
\label{def:perm}
An \emph{associated permutation} to a point $x = (x^1, x^2, \cdots, x^n) \in H_k^n$, denoted $(P, Q)$, is a permutation of $[n] \times [k-1]$, given by $(P_1, Q_1), (P_2, Q_2), \cdots, (P_{(k-1)n}, Q_{(k-1)n})$, which satisfies $x^{P_i}_{Q_i} \ge x^{P_{i+1}}_{Q_{i+1}}$ for $(k-1)n > i \ge 1.$
\end{definition}
We now define the continuous extension of a submodular function.
\begin{definition}[Continuous extension of a submodular function]
\label{def:continuous}
Let $f : [k]^n \to \R$ be a submodular function. We define the continuous extension $\hf: H_k^n \to \R$ of $f$ as follows. For a point $x = (x^1, x^2, \cdots, x^n) \in H_k^n$, let $(P, Q)$ be an associated permutation to $x$. Define the sequence of points $S_0, S_1, \cdots, S_{(k-1)n} \in [k]^n$ as $S_0 = (1, 1, \cdots, 1)$ and $S_i = S_{i-1} + e_{P_i}$ for $(k-1)n \ge i \ge 1.$ Then we define
\[ \hf(x) = x^{P_{(k-1)n}}_{Q_{(k-1)n}} f(S_{(k-1)n}) + \sum_{i=1}^{(k-1)n-1}(x^{P_i}_{Q_i}-x^{P_{i+1}}_{Q_{i+1}})f(S_i)\]
\end{definition}
It is direct to see that \cref{def:continuous} essentially reduces to the Lovasz extension in the case $k = 2$.

We now give an example illustrating \cref{def:continuous}.
\begin{example}
Consider the following submodular function $f:[3]^2 \to \R.$
\begin{align*}
&f(1, 1) = 0, f(1, 2) = 1, f(1, 3) = 2 \\
&f(2, 1) = 1, f(2, 2) = 2, f(2, 3) = 2 \\
&f(3, 1) = 0, f(3, 2) = 1, f(3, 3) = 0.
\end{align*}
Consider the following point in $H_3^2$: $x = (x^1, x^2) = ((0.6, 0.3), (0.5, 0.1)).$ The permutation $(P, Q)$ consistent with $x$ is
$\{ (1, 1), (2, 1), (1, 2), (2, 2) \}$ as $x^1_1 \ge x^2_1 \ge x^1_2 \ge x^2_2.$ This lets us compute that \[ S_0 = (1, 1), S_1 = (2, 1), S_2 = (2, 2), S_3 = (3, 2), S_4 = (3, 3). \] Therefore, we have that
\begin{align*}
\hf(x) &= 0.1 \cdot f(S_4) + 0.2 \cdot f(S_3) + 0.2 \cdot f(S_2) + 0.1 \cdot f(S_1) \\
	   &= 0.1 \cdot f(3, 3) + 0.2 \cdot f(3, 2) + 0.2 \cdot f(2, 2) + 0.1 \cdot f(2, 1) \\
	   &= 0.1 \cdot 0 + 0.2 \cdot 1 + 0.2 \cdot 2 + 0.1 \cdot 1 = 0.7.
\end{align*}
\end{example}
The following properties of the continuous extension are known. See Sections 3, 4, 5 in \cite{Bach19} for proofs.

\begin{theorem}[Properties of the continuous extension]
\label{def:properties}
Let $f: [k]^n \to \R$ be a submodular function, and let $\hf$ be its continuous extension. Then we have that
\begin{itemize}
\item $\hf$ is convex.
\item For $S = (s_1, s_2, \cdots, s_n) \in [k]^n$, if we define $x^i \in H_k$ as $x^i = \sum_{j=1}^{s_i-1} e_j$, then for $x = (x^1, x^2, \cdots, x^n) \in H_k^n$ we have that $\hf(x) = f(S).$
\item We have that $\min_{x \in H_k^n} \hf(x) = \min_{S \in [k]^n} f(S).$
\end{itemize}
Additionally, the vector $g(x) \in \R^{n \times (k-1)}$ defined by $g(x)_{P_j, Q_j} \defeq f(S_j) - f(S_{j-1})$ is a subgradient of the continuous extension, where the $S_j$ are defined as in \cref{def:continuous}.
\end{theorem}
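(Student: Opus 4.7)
The plan is to first derive an integral representation for $\hf$ that generalizes Lovász's classical formula, and to use it as a structural backbone for all the assertions. Specifically, I would show that for $x \in H_k^n$,
\[
\hf(x) = \int_0^1 f\bigl(T(x, t)\bigr)\, dt,
\]
where $T(x, t) \in [k]^n$ is defined coordinate-wise by $T(x, t)_i = 1 + |\{j : x^i_j \ge t\}|$. The verification is a direct calculation: as $t$ decreases from $1$ to $0$ past the sorted values $x^{P_1}_{Q_1} \ge x^{P_2}_{Q_2} \ge \cdots \ge x^{P_{(k-1)n}}_{Q_{(k-1)n}}$, the vector $T(x, t)$ passes through exactly the sequence $S_0, S_1, \dots, S_{(k-1)n}$ of \cref{def:continuous}, taking the value $S_i$ on a subinterval of length $x^{P_i}_{Q_i} - x^{P_{i+1}}_{Q_{i+1}}$ (with $x^{P_0}_{Q_0} \defeq 1$ and $x^{P_{(k-1)n+1}}_{Q_{(k-1)n+1}} \defeq 0$). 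The $S_0$ contribution, of length $1 - x^{P_1}_{Q_1}$, vanishes by the normalization $f((1,\ldots,1)) = 0$, and the remaining integration matches \cref{def:continuous} term by term.

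From this integral form, two of the three bullets are immediate. For the extension property, if $x^i = \sum_{j=1}^{s_i - 1} e_j$, then $|\{j : x^i_j \ge t\}| = s_i - 1$ for every $t \in (0, 1)$, so $T(x, t) \equiv S = (s_1, \dots, s_n)$ and $\hf(x) = f(S)$. For the min equality, $\min_{x \in H_k^n} \hf(x) \le \min_{S \in [k]^n} f(S)$ follows because integer configurations embed into $H_k^n$ via the extension property, and conversely $\hf(x) = \int_0^1 f(T(x,t))\,dt \ge \min_{S \in [k]^n} f(S)$ since $T(x,t) \in [k]^n$ pointwise.

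The main substance is convexity together with the subgradient assertion, which I would prove simultaneously by showing that $\hf$ is a support function. Define
\[
B(f) \;=\; \operatorname{conv}\bigl\{\, g^{(P,Q)} : (P,Q) \text{ a permutation of } [n]\times[k-1]\,\bigr\},
\]
where $g^{(P,Q)}_{P_j, Q_j} \defeq f(S_j^{(P,Q)}) - f(S_{j-1}^{(P,Q)})$ is generated by the state sequence associated with $(P, Q)$. I would prove $\hf(x) = \max_{y \in B(f)} \langle x, y\rangle$ with the maximum attained at $y = g(x)$ for any $(P,Q)$ consistent with $x$. By a standard adjacent-swap exchange it suffices to check that replacing $(P, Q)$ by an adjacent swap that reduces an inversion with respect to the sort order of $x$ can only increase $\langle x, g^{(P,Q)}\rangle$; a direct expansion shows the difference equals
\[
(x^{P_{i+1}}_{Q_{i+1}} - x^{P_i}_{Q_i})\bigl[f(S_i) + f(S'_i) - f(S_{i-1}) - f(S_{i+1})\bigr],
\]
where $S_i, S'_i$ are the two possible intermediate states. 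Since $S_i \vee S'_i = S_{i+1}$ and $S_i \wedge S'_i = S_{i-1}$ coordinatewise, the submodularity axiom gives $f(S_i) + f(S'_i) \ge f(S_{i-1}) + f(S_{i+1})$, and the bracket is nonnegative whenever we are fixing an inversion. Once $\hf$ is expressed as a maximum of linear functions, convexity is automatic, and the fact that $g(x)$ is a subgradient at $x$ follows from Danskin's theorem applied to the attained maximum.

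The step I expect to be the main obstacle is cleanly handling the swap argument when $P_i = P_{i+1}$, since then the swap is a pure relabeling within a single block that leaves the state sequence invariant and must be checked to contribute zero to the inner product. Ties in $x$ also introduce multiple consistent permutations, making $g(x)$ genuinely non-unique --- the integral form ensures $\hf(x)$ itself is well-defined, but one must verify that every such $g(x)$ lies in $B(f)$ and serves as a valid subgradient. The cleanest way is probably to argue directly from the definition of consistency that the non-trivial swap case is exactly the one analyzed above, with the same-coordinate case reducing to a trivial identity.
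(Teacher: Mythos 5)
The paper does not actually prove this theorem (it defers to Bach), so I am judging your argument on its own terms. The integral representation, the extension property, and the equality of minima are all correct. The gap is in the convexity/subgradient part: the identity $\hf(x)=\max_{y\in B(f)}\langle x,y\rangle$ with $B(f)$ the convex hull of $g^{(P,Q)}$ over \emph{all} permutations of $[n]\times[k-1]$ is false. You flagged the same-row swap ($P_i=P_{i+1}$) as the delicate case and asserted it contributes zero, but it does not: such a swap leaves the state sequence unchanged while exchanging which of the coordinates $(P_i,Q_i)$ and $(P_i,Q_{i+1})$ receives the increment $f(S_i)-f(S_{i-1})$ versus $f(S_{i+1})-f(S_i)$, and the resulting change in the inner product is $(x^{P_i}_{Q_{i+1}}-x^{P_i}_{Q_i})\bigl[2f(S_i)-f(S_{i-1})-f(S_{i+1})\bigr]$. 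Controlling its sign requires concavity of $f$ along each coordinate, which lattice submodularity does not imply: for two points differing in a single coordinate the inequality $f(x)+f(y)\ge f(x\vee y)+f(x\wedge y)$ is vacuous. Concretely, take $n=1$, $k=3$, $f(1)=f(3)=0$, $f(2)=-1$ (vacuously submodular). Then $\hf(\theta)=\theta_2-\theta_1$ on $H_3$, while the two permutations give greedy vectors $(-1,1)$ and $(1,-1)$, so $\max_{y\in B(f)}\langle\theta,y\rangle=|\theta_1-\theta_2|\neq\hf(\theta)$ whenever $\theta_1>\theta_2$, and the consistent permutation does not attain the maximum over your $B(f)$.

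The repair is to restrict the maximization to permutations that list, within each row $i$, the column indices in increasing order, i.e. to linear extensions of the poset with $(i,j)\preceq(i,j')$ iff $j\le j'$. These are exactly the orderings whose prefix sets are down-closed in each row, hence of the form $A_S=\{(i,j):j\le s_i-1\}$ for $S\in[k]^n$; on this family (closed under union and intersection, which correspond to $\vee$ and $\wedge$) the induced set function $F(A_S)\defeq f(S)$ is genuinely submodular. Any two such linear extensions are connected by adjacent transpositions of elements in \emph{different} rows, so your cross-row swap computation — which is correct, with $S_i\vee S_i'=S_{i+1}$ and $S_i\wedge S_i'=S_{i-1}$ — then carries the whole argument: a consistent row-respecting ordering (which always exists for $x\in H_k^n$) maximizes $\langle x,\cdot\rangle$ over the restricted hull, $\hf$ is a pointwise maximum of linear functions on $H_k^n$, and the maximizer is a subgradient. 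Note also that with ties inside a row a consistent but row-order-violating permutation can fail to yield a subgradient (the $n=1$ example at $\theta_1=\theta_2$ shows this), so the restriction is needed for the final assertion of the theorem as well.
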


\subsection{SFM over $[k]^n$}
In this section we sketch an algorithm and analysis for submodular function minimization of functions $f: [k]^n \to [-1, 1].$ Precisely we show the following result.
\begin{theorem}
\label{thm:maink}
Given a submodular function $f: [k]^n \to [-1, 1]$ and an $\eps > 0$, we can compute a random point $x \in [k]^n$ with \[ \E[f(x)] \le \min_{y \in [k]^n} f(y) + \eps \] in $\O(nk^4/\eps^2)$ calls to an oracle for $f$.
\end{theorem}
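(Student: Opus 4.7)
The plan is to follow the same framework as \cref{algo:sfm}: run projected stochastic subgradient descent on the continuous extension $\hf$ over the domain $H_k^n$, using a binary-indexed data structure to compute stochastic subgradients in amortized $\O(1)$ oracle calls per iteration, and convert the approximate minimizer of $\hf$ back to a point in $[k]^n$.

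First, I would establish analogues of the three structural lemmas of \cref{sec:sfm2} for the continuous extension $\hf : H_k^n \to \R$: (i) an $\ell_1$ bound on subgradients $g(x)$ that is polynomial in $k$ (following from telescoping the increments $f(S_j)-f(S_{j-1})$ in \cref{def:continuous} against the range $[-1,1]$ of $f$, analogously to \cref{lemma:l1}); (ii) a monotonicity statement that if the associated permutations $(P,Q)$ of two points $x, y \in H_k^n$ differ only at a small number of index pairs, then the coordinates of $g(x) - g(y)$ outside those positions are monotone in a controlled way (the analogue of \cref{lemma:change}); and (iii) a prefix-sum identity $\sum_{j=a}^b g(x)_{P_j, Q_j} = f(S_b) - f(S_{a-1})$ computable in $O(1)$ oracle calls, analogous to \cref{lemma:sum}.

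With these in hand, I would generalize the sampling data structure of \cref{lemma:sample2} (which operates in terms of ``permutation edits'' rather than coordinate sparsity) to permutations of $[n]\times [k-1]$. Because each stochastic gradient step in the descent is $1$-sparse on the encoding, the permutations consistent with consecutive iterates $x_i, x_{i+1}$ differ by a bounded number of edits, and the binary-search routine \FindIndex~still works thanks to the prefix-sum identity. The amortization of preprocessing along the binary tree of intervals, as illustrated in \cref{fig:tree}, carries over unchanged, so the total preprocessing cost over $T$ iterations is $\O(T)$ oracle calls.

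Once the data structure is in place, the analysis of \cref{thm:main2} transports essentially verbatim into \cref{thm:grad}. The new parameters are $R^2 = \max_{x \in H_k^n} \tfrac{1}{2}\|x\|_2^2 \le \tfrac{1}{2} n(k-1) = O(nk)$, and $B^2 = O(k^3)$ coming from the $\ell_1$ subgradient bound (each $1$-sparse stochastic estimate has expected squared norm at most $\|g\|_1^2$, and summing $O(\log T)$ such estimates preserves $O(k^3)$ variance by Cauchy--Schwarz). This yields $T = \O(nk^4/\eps^2)$ iterations, each with amortized $\O(1)$ oracle calls, matching the claimed bound. Finally, an analogue of \cref{lemma:discretize} converts an $\eps$-approximate minimizer of $\hf$ into a point in $[k]^n$: writing $\hf(x)$ via \cref{def:continuous} as a nonnegative combination of the values $f(S_0), \ldots, f(S_{(k-1)n})$ and evaluating each $f(S_j)$ using $O(nk)$ additional oracle calls, one of them must be at most $\hf(x)$.

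The main obstacle will be nailing down the correct $\ell_1$ bound on subgradients of $\hf$, since this controls the $k^3$ factor in $B^2$ and therefore the $k^4$ exponent in the final oracle complexity; a naive bound that grows too fast in $k$ would weaken the result. A secondary subtle point is that when iterating on $H_k^n$, the sparsity of $x_i - x_{i-1}$ as a vector in $\R^{n(k-1)}$ is not by itself the right quantity --- what matters is the edit distance between the associated permutations of $[n]\times[k-1]$, which is exactly why \cref{lemma:sample2} (rather than \cref{lemma:sample}) is the correct tool to generalize.
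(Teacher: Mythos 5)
Your overall approach is the paper's: run projected stochastic subgradient descent on Bach's continuous extension over $H_k^n$, port the three structural lemmas, and reuse the binary-tree amortization. However, your complexity accounting contains two errors that happen to cancel, and each is individually false. First, the correct $\ell_1$ bound on subgradients of the continuous extension is $\|g(x)\|_1 \le 4M(k-1) = O(k)$ for $M=1$ (\cref{lemma:l1k}), so the variance of the $1$-sparse estimator is $B^2 = O(\|g\|_1^2) = O(k^2)$, not $O(k^3)$; your own justification (``expected squared norm at most $\|g\|_1^2$'') only yields $O(k^2)$, and nothing in the structure of the extension produces an extra factor of $k$ here. This gives $T = O(R^2B^2/\eps^2) = O(nk^3/\eps^2)$ iterations, not $O(nk^4/\eps^2)$.

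Second, the amortization does \emph{not} carry over at $\O(1)$ oracle calls per iteration. Although each stochastic step is $1$-sparse in the encoding $\R^{n\times(k-1)}$, the projection onto $H_k^n$ is a per-block isotonic regression, so a single update to block $a$ can perturb many of the coordinates $x^a_1,\dots,x^a_{k-1}$; consequently the associated permutations of $[n]\times[k-1]$ for consecutive iterates can differ by up to $k-1$ moves, and more generally a segment spanning $\ell$ touched blocks costs $O(\ell k)$ preprocessing (\cref{lemma:samplek}), not $O(\ell)$. The total preprocessing over $T$ iterations is therefore $\O(kT)$, and it is $\O(k)\cdot O(nk^3/\eps^2) = \O(nk^4/\eps^2)$ that produces the stated bound --- the extra factor of $k$ lives in the data-structure maintenance, not in the variance. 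You correctly identified that the permutation edit distance (\cref{lemma:sample2}-style reasoning) is the right notion, but then did not follow through on what it costs. Your final discretization step (writing $\hf(x)$ as a nonnegative combination of the $f(S_j)$ and taking the smallest) is fine and its $O(nk)$ cost is dominated.
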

As the algorithm is extremely similar to those presented in \cref{sec:sfm2} we simply state the analogues of the lemmas we must show and how they imply the result. Precisely we need the analogues of \cref{lemma:l1}, \cref{lemma:change}, and \cref{lemma:sum} for the continuous extension which was defined in \cref{sec:prelim}. The proofs are analogous to those of \cref{lemma:l1}, \cref{lemma:change}, and \cref{lemma:sum} which were given in \cite{CLSW17}.

Before stating the lemmas, we remark that the setup and notation we are using is as in \cref{def:continuous}.
\begin{restatable}{lemma}{restatelonek}
\label{lemma:l1k}
For a submodular function $f: [k]^n \to [-M, M]$, all subgradients $g$ of the continuous extension satisfy $\|g(x)\|_1 \le 4M(k-1).$
\end{restatable}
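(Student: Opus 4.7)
The plan is to adapt the proof of \cref{lemma:l1} via a per-level decomposition. Let $(P, Q)$ be the permutation associated to $x$ and $S_0, S_1, \ldots, S_{(k-1)n}$ the corresponding chain, so that $g(x)_{(P_j, Q_j)} = f(S_j) - f(S_{j-1})$. For each level $\ell \in [k-1]$, let $g^{(\ell)} \in \R^n$ collect the level-$\ell$ entries via $g^{(\ell)}_i \defeq g(x)_{(i, \ell)}$, and let $\pi^\ell$ denote the permutation of $[n]$ giving the order in which coordinates first cross level $\ell$ in the chain. It suffices to show $\|g^{(\ell)}\|_1 \le 4M$ for each $\ell$, since summing then yields $\|g(x)\|_1 = \sum_{\ell=1}^{k-1} \|g^{(\ell)}\|_1 \le 4M(k-1)$.

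To bound $\|g^{(\ell)}\|_1$, I would reduce to the classical Lovasz setting via the slice function $f^{(\ell)}: \{0,1\}^n \to [-M, M]$ defined by $f^{(\ell)}(A) = f(v^A_\ell)$, where $v^A_\ell$ has coordinate $i$ at value $\ell+1$ if $i \in A$ and at value $\ell$ otherwise. The lattice submodularity of $f$ implies $f^{(\ell)}$ is itself submodular on $\{0, 1\}^n$. Following the structure of the proof of \cref{lemma:l1}, I would split $\|g^{(\ell)}\|_1 = G^+_\ell + G^-_\ell$ into positive and negative parts and argue each is at most $2M$. For $G^+_\ell$, I would construct an auxiliary chain in $[k]^n$ starting at $(\ell, \ldots, \ell)$ that increments, in the order of $\pi^\ell$, only those coordinates whose level-$\ell$ marginal in the original chain is positive. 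Provided the marginals of this auxiliary chain lower-bound the corresponding positive entries of $g^{(\ell)}$, the auxiliary chain telescopes to give $G^+_\ell \le f(\text{endpoint}) - f((\ell, \ldots, \ell)) \le 2M$; an analogous descending auxiliary chain starting at $(\ell+1, \ldots, \ell+1)$ bounds $G^-_\ell \le 2M$.

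The hard part will be establishing the marginal inequality between the auxiliary and original chains. The chain state $S_{\tau^\ell_s - 1}$ just before the $s$-th level-$\ell$ increment is not componentwise comparable to the auxiliary state: coordinates already past level $\ell$ in the chain may be at values strictly greater than $\ell+1$ (if higher-level increments on them have already occurred), while coordinates not yet reaching level $\ell$ may be at values strictly less than $\ell$. The standard horizontal inequality $f(T + e_i) - f(T) \ge f(T' + e_i) - f(T')$ for $T \le T'$ with $T_i = T'_i$ therefore does not apply directly. Instead, I would invoke the full lattice submodularity $f(u) + f(v) \ge f(u \vee v) + f(u \wedge v)$, applied with $u$ and $v$ the chain and auxiliary states respectively, to reduce the comparison to the componentwise meet or join at which horizontal submodularity does apply, and then chain the resulting inequalities. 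The slack between the classical bound $3M$ and the stated $4M$ per level accommodates this more delicate comparison across incomparable states.
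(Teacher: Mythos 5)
Your outline reproduces the structure of the paper's own proof of \cref{lemma:l1k} --- the per-level decomposition, the positive/negative split with a target of $2M$ for each part, and a telescoping auxiliary chain that increments only the positive coordinates in chain order --- and you have correctly isolated the step that the paper simply asserts ``by the definition of submodularity over $[k]^n$,'' namely the comparison between the auxiliary marginal and the true chain marginal at incomparable states. The difficulty you flag is not merely delicate, however: the intermediate claim $G^+_\ell \le 2M$ on which your whole plan rests is false. Take $k=3$, $n=2$, $M=1$ and define $f$ on $[3]^2$ by $f(a,b)=1$ if $b=3$ or $(a,b)=(3,1)$, and $f(a,b)=-1$ otherwise; submodularity follows by checking the four elementary squares $f(x+e_1)+f(x+e_2)\ge f(x+e_1+e_2)+f(x)$ (each holds with equality or slack $2$). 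For $x=((0.8,0.6),(0.4,0.2))\in H_3^2$ the chain of \cref{def:continuous} is $(1,1),(2,1),(3,1),(3,2),(3,3)$, so $g_{1,2}=f(3,1)-f(2,1)=2$ and $g_{2,2}=f(3,3)-f(3,2)=2$: the positive part at level $\ell=2$ equals $4M$. Concretely, your required inequality fails at the first step, since the auxiliary marginal is $f(3,2)-f(2,2)=0$ while the chain marginal is $g_{1,2}=2$; no chaining of meet/join inequalities can repair an argument whose conclusion is false.

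The structural reason is the one you half-identify: lattice submodularity on $[k]^n$ yields $f(A+e_i)-f(A)\ge f(B+e_i)-f(B)$ only when $A\le B$ \emph{and} $A_i=B_i$, and the coordinates that have not yet reached level $\ell$ in the true chain can sit strictly below the auxiliary state's value $\ell$, which reverses the inequality for exactly those coordinates. The slack between $3M$ in \cref{lemma:l1} and $4M$ per level does not absorb this, because in the example the positive part alone exhausts the entire $4M$ budget of level $2$. You should be aware that the paper's own proof asserts precisely the inequality you flag (with an auxiliary chain started at $(j-1,\dots,j-1)$, which does not even match the level of the increment being bounded) and is subject to the same counterexample; the lemma's conclusion survives in the example ($\|g\|_1=6\le 8$), but establishing it would require an argument that couples the positive and negative parts, or works across levels, rather than bounding each level's positive part by $2M$. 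As written, your proposal cannot be completed along the announced lines.
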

\begin{restatable}{lemma}{restatechangek}
\label{lemma:changek}
Let $x = (x^1, \cdots, x^n), y = (y^1, \cdots, y^n) \in H_k^n$, and $d = (d^1, \cdots, d^n) \in \R_{\ge 0}^{n \times (k-1)}$ be such that $y = x+d$ (respectively $y = x-d$). For all $i$ such that $d^i = 0$ and $j \in [k-1]$ we have that $g(x)_{i, j} \ge g(y)_{i, j}$ (respectively $g(x)_{i,j} \le g(y)_{i,j}$).
\end{restatable}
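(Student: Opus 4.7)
The plan is to mimic the proof of \cref{lemma:change} by unwinding the definition of $g$ from \cref{def:properties} and then reducing the inequality to an instance of the decreasing marginal returns property of the underlying discrete submodular function $f$. I will do the case $y = x+d$; the case $y = x - d$ follows by swapping the roles of $x$ and $y$.

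Fix a coordinate $(i,j) \in [n] \times [k-1]$ with $d^i = 0$. Let $(P^x, Q^x)$ be an associated permutation for $x$ and let $t$ be the position for which $(P^x_t, Q^x_t) = (i, j)$, so that by \cref{def:properties},
\[
g(x)_{i,j} = f(S^x_t) - f(S^x_{t-1}),
\qquad S^x_t = S^x_{t-1} + e_i.
\]
Similarly choose an associated permutation $(P^y, Q^y)$ for $y$, let $t'$ be the position of $(i,j)$, and write $g(y)_{i,j} = f(S^y_{t'}) - f(S^y_{t'-1})$ with $S^y_{t'} = S^y_{t'-1} + e_i$. The first step is to identify $S^x_{t-1}$ and $S^y_{t'-1}$ as concrete vectors in $[k]^n$ in terms of the entries of $x$ and $y$. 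By definition, the $a$-th coordinate of $S^x_{t-1}$ equals $1 + |\{s < t : P^x_s = a\}|$, which (up to tie-breaking) counts how many entries of $x^a$ exceed $x^i_j$. Since $x^i_1 \ge \cdots \ge x^i_{k-1}$, the $i$-th coordinate of $S^x_{t-1}$ is exactly $j$, independent of the permutation chosen.

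The crux of the argument will be to show that we may choose the two associated permutations so that $S^y_{t'-1} \ge S^x_{t-1}$ coordinatewise, with equality in the $i$-th coordinate. The $i$-th coordinate is already handled because $d^i = 0$ forces $y^i = x^i$, so the $i$-th coordinate of $S^y_{t'-1}$ is also $j$. For $a \ne i$, the inequality $y^a \ge x^a$ implies that the number of $y^a$-entries strictly exceeding $y^i_j = x^i_j$ is at least the number of $x^a$-entries strictly exceeding $x^i_j$. One then argues that with a consistent tie-breaking rule (e.g., always place axis $a \ne i$ before axis $i$ among tied values), the count $1 + |\{s < t' : P^y_s = a\}|$ dominates its $x$-counterpart. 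Once $S^y_{t'-1} \ge S^x_{t-1}$ with equality at coordinate $i$ is established, the conclusion
\[
g(y)_{i,j} = f(S^y_{t'-1} + e_i) - f(S^y_{t'-1}) \;\le\; f(S^x_{t-1} + e_i) - f(S^x_{t-1}) = g(x)_{i,j}
\]
follows directly from the discrete diminishing marginal returns property of $f$ on $[k]^n$ (an iterated application of the submodular inequality $f(u) + f(v) \ge f(\max\{u,v\}) + f(\min\{u,v\})$ applied to $u = S^x_{t-1}+e_i$ and $v = S^y_{t'-1}$).

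The main obstacle I expect is the bookkeeping around tie-breaking in the definition of an associated permutation: when several entries $y^a_{j'}$ equal $x^i_j$, the position $t'$ of $(i,j)$ in $(P^y,Q^y)$ is not uniquely determined, and one must verify both that $g(y)_{i,j}$ is independent of the choice (which follows from standard Lovasz/continuous extension arguments, cf.\ \cite{Bach19}) and that some choice realizes the desired domination $S^y_{t'-1} \ge S^x_{t-1}$. Once this is handled, the remainder of the proof is a direct reduction to submodularity of $f$.
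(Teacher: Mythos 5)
Your proposal is correct and follows essentially the same route as the paper's proof: both reduce to showing that the prefix sets $S^y_{t'-1} \ge S^x_{t-1}$ coordinatewise with equality in coordinate $i$ (which holds since $d^i=0$), and then conclude by a single application of the lattice submodular inequality with $u = S^x_{t-1}+e_i$ and $v = S^y_{t'-1}$. Your counting argument for the coordinatewise domination is just a more explicit version of the paper's observation that passing from the permutation for $x$ to that for $y$ only moves pairs leftward without touching axis $i$, and your attention to tie-breaking is a detail the paper elides.
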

\begin{restatable}{lemma}{restatesumk}
\label{lemma:sumk}
Let $x \in H_k^n$ and let $(P, Q)$ be the permutation consistent with $x$. Then we have for any integers $1 \le a \le b \le n(k-1)$ that \[ \sum_{i=a}^b g(x)_{P_i,Q_i} = f(S_b) - f(S_{a-1}), \] where the $S_j$ are defined as in \cref{def:continuous}.
\end{restatable}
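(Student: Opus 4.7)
The plan is to prove Lemma \ref{lemma:sumk} by a direct telescoping argument, mirroring the proof of Lemma \ref{lemma:sum} from the $\{0,1\}^n$ setting. The only ingredient needed is the explicit formula for the subgradient coordinates of $\hf$ provided by Theorem \ref{def:properties}.

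First, I would invoke Theorem \ref{def:properties}, which states that for $x \in H_k^n$ with associated permutation $(P,Q)$ and induced chain $S_0, S_1, \ldots, S_{(k-1)n}$ defined by $S_0 = (1,1,\ldots,1)$ and $S_i = S_{i-1} + e_{P_i}$ (as in Definition \ref{def:continuous}), one has the identity
\[
g(x)_{P_j, Q_j} \;=\; f(S_j) - f(S_{j-1}) \quad \text{for all } j \in [(k-1)n].
\]
This is exactly the coordinate of the subgradient indexed by $(P_j, Q_j)$; note that $(P,Q)$ is a permutation of $[n] \times [k-1]$, so the indices $(P_i, Q_i)$ for $i=a,\ldots,b$ are distinct and the left-hand side of the lemma is unambiguous.

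Next, substituting this into the sum gives
\[
\sum_{i=a}^b g(x)_{P_i, Q_i} \;=\; \sum_{i=a}^b \bigl( f(S_i) - f(S_{i-1}) \bigr),
\]
which telescopes to $f(S_b) - f(S_{a-1})$, yielding the claimed equality.

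There is no real obstacle here; the argument is essentially a definitional unpacking followed by telescoping. The one minor subtlety to address is the potential non-uniqueness of the associated permutation $(P,Q)$ when coordinates of $x$ have ties: in that case, one should fix a single consistent $(P,Q)$ throughout, so that both $g(x)$ and the chain $\{S_j\}$ are defined with respect to the same permutation; once this choice is fixed, the telescoping identity goes through verbatim.
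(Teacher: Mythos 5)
Your proof is correct and is essentially identical to the paper's: both substitute the formula $g(x)_{P_i,Q_i} = f(S_i) - f(S_{i-1})$ from \cref{def:properties} and telescope the sum. The extra remark about fixing a single consistent permutation in case of ties is a fine clarification but does not change the argument.
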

We prove these in \cref{sec:proofs}.

Additionally, as the algorithm we intend to use is projected subgradient descent, we must be able to project onto $H_k^n.$ Projecting onto $H_k$ is simply an isotonic regression, which can be done via the pool-adjacent-violators algorithm \cite{BC90}.

Finally, we need the analogue of \cref{lemma:sample}. The proof is analogous to that of \cref{lemma:sample} and we provide a sketch in \cref{sec:proofs}.
\begin{restatable}{lemma}{restatesamplek}
\label{lemma:samplek}
Let $f:[k]^n \to [-1, 1]$ be a submodular function with continuous extension $\hf$. Let $g$ denote the subgradients of $\hf$. Let $x = (x^1, \cdots, x^n), y = (y^1, \cdots, y^n) \in H_k^n$ be vectors. Let $d = (d^1, \cdots, d^n) \in \R^{n \times (k-1)}$ be the vector such that $d = y-x$, and say that there are $\ell$ indices $i \in [n]$ such that $d^i \neq 0.$ There is a data structure which after $O(\ell k)$ calls to $f$ of preprocessing, supports the following: sample a $1$-sparse random variable $\bz$ with $\E[\bz] = g(y)-g(x)$ and $\E[\|\bz\|_2^2] = O(k^2)$ in $\O(1)$ calls to $f$. Preprocessing is called through $\Process(x, y, f)$, and the sampling is called through $\Sample(x, y, f).$
\end{restatable}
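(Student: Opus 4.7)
The plan is to adapt the proof of \cref{lemma:sample} to the $[k]^n$ setting, where the subgradient $g$ lives in $\R^{n(k-1)}$ naturally indexed by pairs $(i, j) \in [n] \times [k-1]$. By \cref{lemma:l1k}, $\|g(x)\|_1, \|g(y)\|_1 \le 4(k-1)$, so $\|g(y) - g(x)\|_1 = O(k)$. Hence if I can sample a coordinate $(i,j)$ with probability $|(g(y) - g(x))_{i,j}|/\|g(y) - g(x)\|_1$ in $\tilde O(1)$ oracle calls, the output $\bz := \sign((g(y)-g(x))_{i,j}) \cdot \|g(y) - g(x)\|_1 \cdot e_{(i,j)}$ is an unbiased $1$-sparse estimator of $g(y) - g(x)$ with $\E[\|\bz\|_2^2] \le \|g(y)-g(x)\|_1^2 = O(k^2)$, matching the lemma. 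The task reduces to efficient sampling from this distribution.

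To obtain sign-consistent structure despite the arbitrary sign pattern of $d := y - x$, I introduce the intermediate point $z := \max(x, y) \in H_k^n$ taken coordinate-wise; $z$ lies in $H_k^n$ because within each index $i$ both $x^i, y^i$ are decreasingly sorted and the coordinate-wise max of two decreasing sequences is decreasing. Both $z - x \ge 0$ and $z - y \ge 0$ have support contained in the $\ell$ modified indices $\{i : d^i \neq 0\}$, so by \cref{lemma:changek}, the differences $g(z) - g(x)$ and $g(z) - g(y)$ are each sign-consistent on every coordinate $(i, j)$ with $d^i = 0$. Writing $g(y) - g(x) = (g(z) - g(x)) - (g(z) - g(y))$ decomposes the target into two such sign-structured pieces.

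For each piece, I would build a segment tree over the coordinates arranged by the permutation order of one endpoint (say $x$ for the first and $y$ for the second), storing the signed subtree sum at each internal node. By \cref{lemma:sumk}, any contiguous partial sum of $g$ along a permutation reduces to $O(1)$ evaluations of $f$, and hence is computable in $O(1)$ oracle calls. Because the sign of each piece can switch only at one of the $O(\ell k)$ modified coordinates, the tree splits naturally at these boundaries and only $O(\ell k)$ boundary sums need to be evaluated explicitly, giving total preprocessing cost $O(\ell k)$ oracle calls. Sampling then flips a weighted coin between the two trees based on their precomputed $L^1$ masses, descends the chosen tree in $\tilde O(1)$ oracle calls by branching proportional to absolute subtree sums, and sets the sign accordingly.

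The main obstacle is verifying that the segment-tree sums correctly represent the required gradient differences even though the permutations associated to $x, z$ (and to $y, z$) may differ. The resolution is that $g(z)$ depends only on $z$'s own permutation $(P_z, Q_z)$, so via \cref{lemma:sumk} the partial sums of $g(z)$, $g(x)$, and $g(y)$ over any contiguous range in any chosen ordering are each expressible as a constant-size difference of $f$-evaluations; together with the sign-consistency on unmodified indices from \cref{lemma:changek}, this is exactly what is needed to make the weighted-descent sampler compute probabilities proportional to $|(g(z) - g(x))_{i,j}|$ (and similarly for the second piece), completing the construction.
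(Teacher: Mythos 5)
Your proposal is correct and takes essentially the same route as the paper's (sketched) proof: your intermediate point $z=\max(x,y)$ is exactly the paper's $x+d^{+}$, and the decomposition into two sign-consistent pieces via \cref{lemma:changek}, the $O(\ell k)$ intervals split at the modified coordinates, the partial sums via \cref{lemma:sumk}, and the $O(k^2)$ variance bound via \cref{lemma:l1k} all match the paper's argument. The only imprecision is your claim that partial sums of $g(x)$, $g(y)$, $g(z)$ over a contiguous range are computable in ``any chosen ordering''---\cref{lemma:sumk} applies only to ranges contiguous in that point's own associated permutation---but since the unmodified elements retain their relative order in both permutations, your splits at the $O(\ell k)$ modified coordinates do yield ranges contiguous in both orderings, so the construction goes through.
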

The condition $\E[ \| \bz \|_2^2] = O(k^2)$ comes from the fact that our algorithm will satisfy $\E[|\bz|_2^2] \le O\left(\max_x \| g(x) \|_1^2\right) \le O(k^2)$ by \cref{lemma:l1k}.

\begin{proof}[Proof of \cref{thm:maink}]
We use basically the exactly same algorithm as \cref{algo:sfm}, except with the projections in line \ref{line:endbinary} replaced with projections onto $H_k^n$. We can compute that (in the language of \cref{thm:grad}) we have that $R^2 = nk$ and $B^2 = O(k^2)$, hence we have expected error $\eps$ in $O(R^2B^2/\eps^2) = O(nk^3/\eps^2)$ steps. By \cref{lemma:samplek}, the total number of calls in the procedure corresponding to line \ref{line:process} of \cref{algo:sfm} will take on average $\O(k)$ times the iteration count. Therefore, the total number of function calls to $f$ is $\O(k \cdot nk^3/\eps^2) = \O(nk^4/\eps^2)$ as desired.
\end{proof}

\subsection{SFM for continuous functions}
\label{sec:nonconvex}
In this section we prove \cref{thm:nonconvex}. Our setup is the following: we have a submodular function $f: [0,1]^n \to \R$, and we wish to approximately minimize $f$. Our algorithms are in terms of the $L^\infty$-Lipschitz constant of $f$, which we denote as $L$.

Our algorithm is simple: we essentially just discretize $f$ and use \cref{thm:maink}. Specifically, define $k = \frac{2L}{\eps}$, and define the function $f':[k]^n \to \R$ as $f'(x) = f(x/k)$ for $x \in [k]^n \subseteq \R^n$. Note that it is clear that \[ \min_{x \in [k]^n} f'(x) \le \min_{x \in [0,1]^n} f(x) + \frac{L}{k} \le \min_{x \in [0,1]^n} f(x) + \eps/2 \] by our choice of $k$. We can also verify that $f'$ is submodular. Therefore, it suffices to minimize $f'$ within $\eps/2$. Without loss of generality, we assume $f'(1, 1, \cdots, 1) = 0.$

We can almost directly apply \cref{thm:maink}, except that the range of $f'$ is not $[-1, 1]$, and is instead $[-L, L]$. This change multiplies the $B^2$ term in our application of \cref{thm:grad} by $L^2$ (so that $B^2 = O(k^2L^2)$), giving a total complexity of $\O(nk^4L^2/\eps^2) = \O(nL^6/\eps^6)$ oracle calls to $f$ as desired.

We now formally give the proof. It is essentially as described above.
\begin{proof}[Proof of \cref{thm:nonconvex}]
Define $k = \frac{2L}{\eps}$, and define the function $f':[k]^n \to \R$ as $f'(x) = f(x/k)$ for $x \in [k]^n \subseteq \R^n$. We have that \[ \min_{x \in [k]^n} f'(x) \le \min_{x \in [0,1]^n} f(x) + \frac{L}{k} \le \min_{x \in [0,1]^n} f(x) + \eps/2.\] Therefore, it suffices to minimize $f'$ to additive $\eps/2.$

To do this, we use the same algorithm as in the proof of \cref{thm:maink}. As in the notation of \cref{thm:grad}, we have that $R^2 = nk$, and $B^2 = O(k^2L^2)$, where the extra factor of $L^2$ comes from the fact that the range of $f'$ is $O(L)$. Thus, the expected error is $\eps/2$ in $O(R^2B^2/\eps^2) = O(nk^3L^2/\eps^2)$ iterations. By the same argument as in the proof of \cref{thm:maink}, we have that on average, each iteration requires $\O(k)$ function calls. The total number of calls is therefore $\O(nk^4L^2/\eps^2) = \O(nL^6/\eps^6)$ function calls by our choice of $k$.
\end{proof}

\subsection*{Acknowledgements}

We thank Deeparnab Chakrabarty, Yin Tat Lee, Sahil Singla, Kevin Tian, and Sam Chiu-wai Wong for helpful conversations. 

{\small
\bibliographystyle{alpha}
\bibliography{refs}}

\begin{appendix}
\section{Nonconstructive derandomization of \cref{algo:sparsesfm}}
\label{sec:derandomize}
In this section we explain how we can nonconstructively derandomize \cref{algo:sparsesfm}. In other words, we sketch a deterministic algorithm such that given an integer-valued submodular function $f:\{0,1\}^n \to [-M,M]$ finds a permutation $P$ such that we can compute all nonzero coordinates of $g^P$ in $\O(\poly(M))$ oracle calls.

Now we explain the main idea behind the derandomization. Let's consider the case in \cref{algo:sparsesfm} when $P_l = P_r = \emptyset$ at the start. Let $S$ be the random subset generated in line \ref{line:gens}, where each element in $P$ is in $S$ with probability $\frac{1}{10M}.$ As explained in the proof of \cref{lemma:findperm}, this choice of $S$ allows to make progress as long as there is exactly one index $i \in S$ such that $g^P_i \neq 0$, and for all other indices $j \in S$ with $j \neq i$ we have that $g^P_j = 0.$ As long as $g^P \neq 0$, the probability of this occuring is at least \[ \frac{1}{10M} \cdot \left(1-\frac{1}{10M}\right)^{3M} \ge \frac{1}{20M}. \] Imagine randomly generating $T = 500M^2\log n$ such sets $S_1, \cdots, S_T$. The probability that there is no set $S_j$ satisfying the desired property of exactly one index $i \in S_j$ with $g^P_i \neq 0$ is at most $\left(1 - \frac{1}{20M}\right)^{500M^2 \log n} \le n^{-20M}.$ Note that because $\|g^P\|_1 \le 3M$ by \cref{lemma:l1}, there are at most $(2n)^{3M}$ distinct possible subgradients $g^P.$ Because \[ (2n)^{3M} \cdot n^{-20M} < 1, \] a union bound tells us that there exist \emph{deterministic} sets $S_1, \cdots, S_T$ that the algorithm can precompute independent of $f$ such that for any nonzero subgradient $g^P$ there is a set $S_j$ for $1 \le j \le T$ such that for exactly index $i \in S_j$ we have $g^P_i \neq 0$, as desired.

We now formally state this discussion as a lemma.
\begin{lemma}
\label{lemma:derandomize}
There is a deterministic algorithm which give an integer-valued submodular function $f:\{0,1\}^n \to [-M,M]$ computes a permutation $P$ and finds all nonzero entries of $g^P$ in $\O(M^3)$ oracle calls.
\end{lemma}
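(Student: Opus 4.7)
The plan is to derandomize \Findperm{} (specifically the random subset $S$ sampled in line \ref{line:gens}) by replacing it with a deterministic family $\mathcal{S} = \{S_1, \ldots, S_T\}$ of subsets of $[n]$ that depends only on $n$ and $M$, not on $f$. The family must be \emph{universal}: for every potential nonzero subgradient $g^P$ (with $\|g^P\|_1 \le 3M$ by \cref{lemma:l1}), some $S_j \in \mathcal{S}$ isolates exactly one nonzero coordinate of $g^P$, so that line \ref{line:check1} or \ref{line:check2} triggers when we substitute $S_j$ for the random $S$.

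To produce $\mathcal{S}$, I would apply the probabilistic method as sketched in the preceding paragraph of the excerpt. Since $f$ is integer-valued with range $[-M, M]$, \cref{lemma:l1} implies every subgradient $g^P$ has at most $3M$ nonzero coordinates, each an integer in $[-3M, 3M]$; hence the number of distinct vectors $g^P$ that can ever arise is at most $\binom{n}{3M}(6M+1)^{3M} \le (2n)^{3M}$. Drawing each $S_j$ independently with inclusion probability $1/(10M)$, the analysis in the proof of \cref{lemma:findperm} shows that any fixed nonzero $g^P$ is ``isolated'' by a given $S_j$ with probability at least $1/(20M)$. Taking $T = \lceil 60 M^2 \log(2n) \rceil = \O(M^2)$ independent samples, the chance that \emph{no} $S_j$ isolates a coordinate of a particular $g^P$ is at most $(1 - 1/(20M))^T \le e^{-T/(20M)} < (2n)^{-3M}$. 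Union-bounding over all $(2n)^{3M}$ possible subgradients gives a positive probability that one realization of $\mathcal{S}$ is universal; fix any such realization (this is the nonconstructive step).

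Next, modify \Findperm{} so that, at each outer iteration (line \ref{line:start}), it loops over $S_1, \ldots, S_T$ and tests the conditions in lines \ref{line:check1}, \ref{line:check2} with $S \gets S_j$, halting at the first $S_j$ that succeeds and then calling \FindIndex{} exactly as before. By \cref{lemma:sum}, evaluating each sum $\sum_{j \in S_j} g^Q_j$ costs only $O(1)$ oracle calls, so one outer iteration uses $O(T) + \O(1) = \O(M^2)$ oracle calls in the deterministic version. Universality of $\mathcal{S}$ guarantees that whenever the current $g^{P'}$ restricted to the ``unfixed'' coordinates is nonzero, some $S_j$ fires; the remainder of the correctness argument (each successful iteration fixes one of at most $3M$ nonzero coordinates into $P_l$ or $P_r$) transfers verbatim from \cref{lemma:findperm}, so there are $O(M)$ outer iterations.

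Multiplying, the total oracle cost is $O(M) \cdot \O(M^2) = \O(M^3)$, matching the statement. The main obstacle is the universality requirement: the set family must work simultaneously for every subgradient that could appear during any execution on any input $f$, not just for one fixed $g^P$. This is precisely what the sparsity bound $\|g^P\|_1 \le 3M$ buys us, by capping the number of distinct targets at $(2n)^{3M}$ so that the union bound closes with only $\O(M^2)$ sets. I would also remark (for completeness) that the argument is genuinely nonconstructive: we know $\mathcal{S}$ exists but do not exhibit it, which is acceptable because the lemma only asserts the existence of a deterministic algorithm.
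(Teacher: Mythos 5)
Your proposal is correct and matches the paper's own proof essentially step for step: the same probabilistic-method construction of a universal family of $\O(M^2)$ subsets with inclusion probability $1/(10M)$, the same union bound over all integer vectors of $\ell_1$-norm at most $3M$, and the same $O(M)\cdot\O(M^2)=\O(M^3)$ oracle accounting. The only quibble is a constant: your claimed count $\binom{n}{3M}(6M+1)^{3M}\le(2n)^{3M}$ is false as stated, so the union bound with $T=\lceil 60M^2\log(2n)\rceil$ does not quite close; enlarging the constant in $T$ (the paper takes $T=500M^2\log n$, giving per-vector failure probability $n^{-20M}$ against at most $2(2n)^{3M}$ vectors) fixes this without affecting the asymptotics.
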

\begin{proof}
This proof essentially follows the above discussion. Define $T = 500M^2 \log n$. Our goal is to deterministically construct sets $S_1, S_2, \dots, S_T \subseteq [n]$ such that for any nonzero vector $g \in \R^n$ with integer entries and $\|g\|_1 \le 3M$, that for some $1 \le j \le T$, we have that there is exactly one element $i \in S_j$ such that $g_i \neq 0.$ With this construction, we can simply copy the proof of \cref{lemma:findperm}, using that all subgradients $g$ of the Lovasz extension have integer entries and $\ell_1$ norm at most $3M$. We focus on this goal in the remainder of the proof.

Randomly generate subsets $S_1, \cdots, S_T \subseteq [n]$ as follows: each $S_j$ is such that each $i \in P$ is independently in $S_j$ with probaiblity $\frac{1}{10M}.$ Let $g \in \R^n$ be a nonzero vector with integer entries and $\|g\|_1 \le 3M$. We now bound the probability for some $S_j$ we have that there is exactly one element $i \in S_j$ with $g_i \neq 0.$ For a fixed $j$, the probability that $S_j$ satisfies this property is at least \[ \frac{1}{10M} \cdot \left(1 - \frac{1}{10M}\right)^{3M} \ge \frac{1}{20M}. \] Therefore, the probability that no $S_j$ satisfy the desired property is at most
\[ \left(1 - \frac{1}{20M}\right)^T \le n^{-20M}. \]

Our next goal is to count the number of possible distinct vectors $g \in \R^n$ with integer entries and $\|g\|_1 \le 3M.$ A direct counting argument easily shows that the number of such vectors $g$ is at most $\sum_{k=0}^{3M} (2n)^k \le 2(2n)^{3M}.$ Therefore, by a union bound (as $2(2n)^{3M} \cdot n^{-20M} < 1$) there exists sets $S_1, \dots, S_T \subseteq [n]$ such that for all vectors $g \in \R^n$ with integer entries and $\|g\|_1 \le 3M$ that there is a $j$ such that there is exactly one element $i\in S_j$ with $g_i \neq 0.$ We can find such sets $S_1, \dots, S_T$ just by brute forcing over all possibilities: these are independent of $f$, so they do not cost oracle calls to compute.
\end{proof}
It would be interesting to give a polynomial time algorithm to deterministically construct sets $S_1, S_2, \dots, S_T$ as described in \cref{lemma:derandomize}.
\begin{rem}
We can improve the number of oracle calls in \cref{lemma:derandomize} to $\O(M^2)$ and the value of $T$ in the proof to $\O(M)$ using the same technique as in \cref{rem:m}.
\end{rem}

\section{Additional Proofs}
\label{sec:proofs}

\restatesample*
\begin{proof}
Let $d = y-x$. We first argue that it suffices to consider the case where $d$ either has all nonnegative or all nonpositive coordinates.
To this end, let $d^+, d^- \in \R^n$ be the positive and negative parts of $d$, precisely defined as
\[ d^+_i = \max(0, d_i) \text{ and } d^-_i = \min(0, d_i) \forall 1 \le i \le n. \] Write
\[ g(y)-g(x) = \left(g(x+d^++d^-)-g(x+d^+)\right) + \left(g(x+d^+)-g(x)\right). \] To sample the estimate $\bz$ for $g(y)-g(x)$, we instead sample $\bz_1$ for $\left(g(x+d^++d^-)-g(x+d^+)\right)$ and $\bz_2$ for $\left(g(x+d^+)-g(x)\right)$, and set $\bz$ to be either $2\bz_1$ or $2\bz_2$, each with probability $\frac12.$ It is clear that if both $\E[\| \bz_1 \|_2^2] = O(1)$ and $\E[\| \bz_2 \|_2^2] = O(1)$, then $\E[\| \bz \|_2^2] = O(1).$ This shows that we can reduce to the case where either $d$ has all nonnegative or nonpositive coordinates.

By symmetry, we only consider the case where $d$ has all nonnegative coordinates. Let $y = x+d.$ Let $P_x$ be the permutation consistent with $x$, and let $P_y$ be the permutation consistent with $y$. Note that because $d$ is $k$-sparse, one can transform permutation $P_x$ into $P_y$ deleting $k$ elements from $P_x$ and inserting them back. Therefore, there exist subsets $I_1, I_2, \cdots, I_{2k} \subset [n]$ that are intervals in both $P_x$ and $P_y$.

The phase $\Process(x, y, f)$ then proceeds computing $D_t \defeq \sum_{j \in I_t} (g(y)_j-g(x)_j)$ for all $1 \le t \le 2k.$ By \cref{lemma:sum} this requires $O(k)$ queries to $f$. Note that each for each $j \in I_t$, the terms $g(y)_j-g(x)_j$ are all the same sign by \cref{lemma:change}, hence $\sum_{t=1}^{2k} |D_t| = |g(y)-g(x)|_1.$

$\Sample(x, y, f)$ proceeds as follows. Choose an interval $I_t$ proportional to $|D_t|.$ Let $I$ be the interval that is chosen. Split this interval in half into two intervals $I'$ and $I''.$ Compute the sums $D' = \sum_{j \in I'} (g(y)_j-g(x)_j)$ and $D'' = \sum_{j \in I''} (g(y)_j-g(x)_j).$ Sample one of $I'$ and $I''$ proportional to $D'$ and $D''$, respectively. Now continue recursively. When the interval is size $1$, say containing the element $j$, return the vector $\bz = \| g(y)-g(x) \|_1 \cdot \sign(g(y)_j-g(x)_j) \cdot e_j$. By \cref{lemma:sum} this phase takes $\O(1)$ queries to $f$ and returns a $1$-sparse estimate $\bz$ for $g(y)-g(x).$ We can check by the construction that $\E[\bz] = g(y)-g(x)$, and \[ \E[ \| \bz \|_2^2] \le \| g(y)-g(x) \|_1^2 = O(1) \] by \cref{lemma:l1}.
\end{proof}

\restatesampletwo*
\begin{proof}[Proof sketch]
It is direct to see that if we can get from $P_x$ to $P_y$ by deleting $k$ elements from $P_x$ and inserting them back in other positions, then there exist points $x', y' \in [0,1]^n$ where all coordinates of $x'$ are distinct and all coordiantes of $y'$ are distinct, such that $P_x$ is consistent with $x'$, $P_y$ is consistent with $y'$, and $y'-x'$ is $k$-sparse. Now use the proof of \cref{lemma:change} above on the points $x'$ and $y'$.
\end{proof}

\restatelonek*
\begin{proof}
Let $g$ be the gradient at a point $x$. We prove that for $1 \le j < k$ we have that $\sum_{i=1}^n |g_{i,j}| \le 4M.$ Summing over all $j$ then gives us that $\|g\|_1 \le 4M(k-1).$ We first bound the sum of the positive entries of $g$, i.e. we show that $\sum_{i=1}^n \max(0,g(i,j)) \le 2M$ for any $j \in [k-1].$ An analogous argument will show that $\sum_{i=1}^n \min(0,g(i,j)) \ge -2M$, which together is sufficient.

Fix $j \in [k-1].$ Let $(P,Q)$ be the permutation corresponding to our point $x$. Without loss of generality, assume that $g_{1,j} \ge g_{2,j} \ge \cdots \ge g_{n,j}.$ Let $a_1, a_2, \dots, a_n$ be such that $(P_{a_y}, Q_{a_y}) = (y,j).$ Let the $S_i$ be defined as in \cref{def:continuous}. Note that $a_1 \le \cdots \le a_n$ by our assumption. Let $X = \{i \in [n] : g_{i,j} > 0\}$, let $t = |X|$ and let $i_1 \le \cdots \le i_t$ be the elements of $X$. Define $v_0 = (j-1, j-1, \cdots, j-1) \in \R^n.$ For $1 \le y \le t$, define $v_y = v_{y-1} + e_{i_y}.$ Note that by the definition of submodularity over $[k]^n$ that $f(v_y)-f(v_{y-1}) \ge f(S_{a_{i_y}})-f(S_{a_{i_y}-1}) = g_{i_y,j}$. Therefore, we have that
\[ 2M \ge f(v_t)-f(v_0) = \sum_{y=1}^t f(v_y)-f(v_{y-1}) \ge \sum_{y=1}^t g_{i_y,j} \] as desired.
\end{proof}

\restatechangek*
\begin{proof}
We only prove the case $y = x+d$, as the $y = x-d$ case is analogous. Let $(P, Q)$ be the permutation corresponding to $x$ and $(P',Q')$ is the permutation corresponding to $y$. Let $S_i$ be the sets defined in \cref{def:continuous} for $x$, and let $S_i'$ be the sets for $y$.

Let $a, b$ be such that $(i, j) = (P_a, Q_a)$ and $(i, j) = (P_b', Q_b')$. Then \cref{def:properties} tell us that $g(x)_{i,j} = f(S_a)-f(S_{a-1})$ and $g(y)_{i,j} = f(S_b')-f(S_{b-1}')$, where $S_a = S_{a-1}+e_i$ and $S_b' = S_{b-1}' +  e_i$ as defined in \cref{def:continuous}. We will show that $(S_{a-1})_t \le (S_{b-1}')_t$ for all indices $t \in [n]$, and $(S_{a-1})_i = (S_{b-1}')_i.$ Then the inequality \[ f(S_a)-f(S_{a-1}) = f(S_{a-1}+e_i)-f(S_{a-1}) \ge f(S_{b-1}'+e_i)-f(S_{b-1}') = f(S_b')-f(S_{b-1}') \] by the definition of submodularity over $[k]^n.$

To argue that $S_{b-1}' \ge S_{a-1}$ coordinate-wise, note that because $y = x+d$ and $d \ge 0$ that we can get from $(P,Q)$ to $(P',Q')$ by moving some pairs $(P_t,Q_t)$ to the left (where the ``left" has the larger elements) but without touching any $(P_t,Q_t)$ with $P_t=i$ as $d^i = 0.$ By the definition of the $S_i$ in \cref{def:continuous}, we can now directly check that $S_{b-1}' \ge S_{a-1}$ entry-wise, and $(S_{a-1})_i = (S_{b-1}')_i$ as desired.
\end{proof}

\restatesumk*
\begin{proof}
This follows immediately from the definition of $g$. By \cref{def:properties} we have that $g(x)_{P_i,Q_i} = f(S_i)-f(S_{i-1}).$ Therefore, \[ \sum_{i=a}^b g(x)_{P_i,Q_i} = \sum_{i=a}^b f(S_i)-f(S_{i-1}) = f(S_b) - f(S_{a-1}).\]
\end{proof}

\restatesamplek*
\begin{proof}[Proof sketch]
Using the same technique as in \cref{lemma:sample} we reduce to the case where $d$ has all non-negative coordinates, so that $y=x+d$. Because there are $\ell$ indices $i \in [n]$ such that $d^i \neq 0$, we can transform the associated permutation $(P,Q)$ of $x$ to the associated permutation $(P',Q')$ of $y$ by deleting and reinserting $k\ell$ elements, corresponding to $k$ coordinates per each $i$ with $d^i \neq 0$, and there are at most $\ell$ such indices $i$.

By submodularity, we can construct $O(k\ell)$ intervals where $g(y)-g(x)$ is either all positive or all negative. We can preprocess these intervals in $O(k\ell)$ oracle calls as done in \cref{lemma:sample}. Afterwards, we can sample $1$-sparse estimates to $g(y)-g(x)$ in $\O(1)$ queries. As in \cref{lemma:sample} our estimate will satisfy $\E[\|\bz\|_2^2] = \|g(y)-g(x)\|_1^2 = O(k^2)$ by \cref{lemma:l1k}.
\end{proof}
\end{appendix}

\end{document}